\let\doendproof\endproof
\renewcommand\endproof{~\hfill\qed\doendproof}
\newcommand{\cl}{\lvert\varphi\rvert}
\newcommand{\abs}[1]{\left\lvert#1\right\rvert}
\newcommand{\p}{\mathtt{p}}
\newcommand{\cupdot}{\mathbin{\mathaccent\cdot\cup}}
\newcommand{\set}[1]{\{#1\}}
\newcommand{\valuation}{\mathcal{V}}
\newcommand{\rLTL}{\mathrm{rLTL}}
\newcommand{\LTL}{\mathrm{LTL}}
\newcommand{\valueset}{\mathbb{B}_4}
\newcommand{\win}[1]{\mathrm{Win}(#1)}
\newcommand{\game}{\mathcal{G}}
\newcommand{\arena}{\mathcal{A}}
\newcommand{\reach}[1]{\mathrm{Reach}_1(#1)}
\newcommand{\pre}{\mathrm{pre}}
\newcommand{\enfMore}[1]{\mathrm{E}_{\geq #1}}
\newcommand{\enfEMore}[1]{\mathrm{E}_{> #1}}
\newcommand{\enfExact}[1]{\mathrm{E}_{= #1}}
\newcommand{\smrMore}[1]{\mathrm{V}_{> #1}}
\newcommand{\smrLess}[1]{\mathrm{V}_{< #1}}
\newcommand{\smrEMore}[1]{\mathrm{V}_{\geq #1}}
\newcommand{\smrExact}[1]{\mathrm{V}_{= #1}}
\newcommand{\summaryMax}{\mathrm{smry}}
\newcommand{\playsummary}[3]{\mathrm{smry}(#1,#2,#3)}
\newcommand{\stratsummary}[2]{\mathrm{smry}(#1,#2)}
\newcommand{\summaries}{\mathcal{S}}
\newcommand{\smry}{{s}}
\newcommand{\smrydash}{{s}'}
\newcommand{\blank}{\bot}
\newcommand{\blanks}{\blank,\ldots,\blank}
\newcommand{\lex}{\mathrm{lex}}
\newcommand{\gelex}{\ge_{\lex}}
\newcommand{\lelex}{\le_{\lex}}
\newcommand{\glex}{>_{\lex}}
\newcommand{\llex}{<_{\lex}}
\newcommand{\leftshift}[1]{\mathrm{lft}(#1)}
\newcommand{\bmfree}[2]{\Pi(\sigma,\p)}
\newcommand{\evade}[1]{\mathrm{Ev}(#1)}
\newcommand{\qnew}{v_{\mathrm{new}}}
\newcommand{\strong}{S}
\newcommand{\weak}{W}
\newcommand{\new}[1]{#1}
\colorlet{darkgreen}{green!80!black}
\colorlet{darkred}{red!80!black}
\tikzset{auto, >= stealth}
\tikzset{every edge/.append style={thick, shorten >= 1pt}}
\tikzset{initial/.style={draw, thick, <-, shorten <=1pt}}
\tikzset{player0/.style = {draw, thick, shape=circle, minimum size=8mm}}
\tikzset{player1/.style = {draw, thick, shape=rectangle, minimum size=8mm}}
\tikzset{robust/.style={line width=.16ex,line join=round}}
\let\Box\relax
\DeclareMathOperator{\Box}{%
	\text{%
		\tikz[baseline]{%
    			\draw[robust] (0ex,-.1ex) -- (0ex, 1.4ex) -- (1.5ex, 1.4ex) -- (1.5ex, -.1ex) -- cycle;%
		}%
	}%
}
\DeclareMathOperator{\Boxdot}{%
	\text{%
		\tikz[baseline]{%
    			\draw[robust] (0ex, -.1ex) -- (0ex, 1.4ex) -- (1.5ex, 1.4ex) -- (1.5ex, -.1ex) -- cycle;%
	    		\fill (.75ex, .65ex) circle (.15ex);%
    		}%
	}%
}
\let\Diamond\relax
\DeclareMathOperator{\Diamond}{%
	\text{%
		\tikz[baseline]{%
			\draw[robust] (0ex,.6ex) -- (.95ex, 1.55ex) -- (1.9ex, .6ex) -- (.95ex, -.35ex) -- cycle;%
		}%
	}%
}
\DeclareMathOperator{\Diamonddot}{%
	\text{%
		\tikz[baseline]{%
			\draw[robust] (0ex,.6ex) -- (.95ex, 1.55ex) -- (1.9ex, .6ex) -- (.95ex, -.35ex) -- cycle;%
			\fill (.95ex, .6ex) circle (.15ex);%
		}%
	}%
}
\DeclareMathOperator{\Xdot}{%
	\text{%
		\tikz[baseline]{%
    			\draw[robust] (.75ex, .65ex) circle (.75ex);%
	    		\fill (.75ex, .65ex) circle (.15ex);%
    		}%
	}%
}
\DeclareMathOperator{\Udot}{%
	\text{%
		\tikz[baseline]{%
			\node[inner sep=0pt, anchor=base, font=\bfseries] {U};%
			\fill (.1ex, .9ex) circle (.15ex);%
		}%
	}%
}
\DeclareMathOperator{\Rdot}{%
	\text{%
		\tikz[baseline]{%
			\node[inner sep=0pt, anchor=base, font=\bfseries] {R};%
			\fill (-.025ex, 1.175ex) circle (.15ex);%
		}%
	}%
}
\begin{document}
\title{Robustness-by-Construction Synthesis:\\ Adapting to the Environment at Runtime}
\titlerunning{Robustness-by-Construction Synthesis}
\author{Satya Prakash Nayak\inst{1}\and Daniel Neider\inst{2}\and Martin Zimmermann\inst{3}}
%
\authorrunning{S.\ P.\ Nayak et al.}
%
\institute{Max Planck Institute for Software Systems,
Kaiserslautern, Germany \\
\email{sanayak@mpi-sws.org} \and
Safety and Explainability of Learning Systems Group \\
Carl von Ossietzky Universität Oldenburg, Oldenburg, Germany \\
\email{daniel.neider@uni-oldenburg.de} \and
Aalborg University, Aalborg, Denmark\\
\email{mzi@cs.aau.dk}}

\maketitle              %
\begin{abstract}
While most of the current synthesis algorithms only focus on correctness-by-construction, ensuring robustness has remained a challenge. Hence, in this paper, we address the robust-by-construction synthesis problem by considering the specifications to be expressed by a robust version of Linear Temporal Logic~($\LTL$), called robust $\LTL$~($\rLTL$). rLTL has a many-valued semantics to capture different degrees of satisfaction of a specification, i.e., satisfaction is a quantitative notion. 

We argue that the current algorithms for $\rLTL$ synthesis do not compute optimal strategies in a non-antagonistic setting. So, a natural question is whether there is a way of satisfying the specification ``better'' if the environment is indeed not antagonistic. We address this question by developing two new notions of strategies. The first notion is that of adaptive strategies, which, in response to the opponent’s non-antagonistic moves, maximize the degree of satisfaction. The idea is to monitor non-optimal moves of the opponent at runtime using multiple parity automata and adaptively change the system strategy to ensure optimality. The second notion is that of strongly adaptive strategies, which is a further refinement of the first notion. These strategies also maximize the opportunities for the opponent to make non-optimal moves.
We show that computing such strategies for $\rLTL$ specifications is not harder than the standard synthesis problem, e.g., computing strategies with $\LTL$ specifications, and takes doubly-exponential time.
\end{abstract}

\section{Introduction}
\label{section1}

Formal methods have focused on the paradigm of correctness-by-construction, i.e., ensuring that systems are guaranteed to meet their design specifications.
While correctness is necessary, it has widely been acknowledged that this property alone is insufficient for a good design when a reactive system interacts with an ever-changing, uncontrolled environment.
To illustrate this point, consider a typical correctness specification $\varphi \Rightarrow \psi$ of a reactive system, where $\varphi$ is an environment assumption and $\psi$ the system's desired guarantee. 
Thus, if the environment violates $\varphi$, the entire implication becomes vacuously true, regardless of whether the system satisfies $\psi$.
In other words, if the assumption about the environment is violated, the system may behave arbitrarily.
This behavior is clearly undesirable as modeling any reasonably complex environment accurately and exhaustively is exceptionally challenging, if not impossible.

The example above shows that reactive systems must not only be correct but should also be \emph{robust} to unexpected environment behavior.
The notion of robustness we use in this paper is inspired by concepts from control theory~\cite{DBLP:journals/tecs/MajumdarRT13,DBLP:conf/hybrid/SamuelMSN20,DBLP:conf/cdc/SamuelMSN20,DBLP:journals/tac/TabuadaCRM14} and requires that deviations from the environment assumptions result in at most proportional violations of the system guarantee.
More precisely, ``minor'' violations of the environment assumption should only cause ``minor'' violations of the system guarantee, while ``major'' violations of the environment assumption allow for ``major`` violations of the system guarantee.

To capture different degrees of violation (or satisfaction) of a specification, we rely on a many-valued extension of Linear Temporal Logic~(LTL)~\cite{LTL}, named \emph{robust Linear Temporal Logic~(rLTL)}, which has recently been introduced by Tabuada and Neider~\cite{DBLP:conf/csl/TabuadaN16}.
The basic idea of this logic can best be illustrated by considering the prototypical environment assumption $\varphi \coloneqq \Box p$ (``always~$p$''), which demands that the environment ensures that an atomic proposition~$p$ holds at every step during its interaction with the system.
Clearly, $\varphi$ is violated even if $p$ does not hold at a single step, which is a ``minor'' violation. However, the classical Boolean semantics of LTL cannot distinguish between this case and the case where $p$ does not hold at any position, which is a ``major'' violation.
To distinguish these (and more) degrees of violations, $\rLTL$ adopts a five-valued semantics with truth values $\mathbb B_4 = \{ 1111, 0111, 0011, 0001, 0000\}$. The set~$\mathbb B_4$ is ordered according to $1111 > 0111 > 0011 > 0001 > 0000$, where $1111 $ is interpreted as $\mathit{true}$ and all other values as increasing shades of $\mathit{false}$.
In case of the formula $\varphi$, for instance, the interpretation of these five truth values is as follows: $\varphi$ evaluates to $1111$ if the environment ensures $p$ at every step of the interaction, $\varphi$ evaluates to $0111$ if $p$ holds almost always, $\varphi$ evaluates to $0011$ if $p$ holds infinitely often, $\varphi$ evaluates to $0001$ if $p$ holds at least once, and $\varphi$ evaluates to $0000$ if $p$ never holds.
The semantics of $\rLTL$ is then set up so that $\varphi \Rightarrow \psi$ evaluates to $1111$ if any violation of the environment assumption $\varphi$ causes at most a proportional violation of the system guarantee $\psi$ (i.e., if $\varphi$ evaluates to truth value $b \in \mathbb B_4$, then $\psi$ must evaluate to a truth value $b' \geq b$).

Here, we are interested in the synthesis problem for $\rLTL$ specifications. As usual, we model such a synthesis problem as an infinite-duration two-player game. 
Since we study $\rLTL$ synthesis, we consider games with $\rLTL$ winning conditions, so-called $\rLTL$ games.

$\rLTL$ games with a Boolean notion of winning strategy for the system player have already been studied by Tabuada and Neider~\cite{DBLP:conf/csl/TabuadaN16}. In their setting, the objective for the system player is as follows: given a truth value $b \in \mathbb B_4$, he must react to the actions of the environment player in such a way that the specification is satisfied with a value of at least $b$.
As for $\omega$-regular games, a winning strategy for the system player can immediately be implemented in hardware or software.
This implementation then results in a reactive system that is guaranteed to satisfy the given specification with at least a given truth value $b \in \mathbb B_4$, regardless of how the environment acts.

While $\rLTL$ games provide an elegant approach to robustness-by-construction synthesis, the Boolean notion of winning strategies that Tabuada and Neider adopt has a substantial drawback: it does not incentivize the system player to satisfy the specification with a value better than $b$, even if the environment player allows this.
Of course, one can (and should) statically search for the largest $b \in \mathbb B_4$ such that the system player can win the game.
However, this traditional worst-case view does not account for many practical situations where the environment is not antagonistic, e.g., in the presence of intermittent disturbances or noise, or when the environment cannot be modeled entirely~\cite{DBLP:conf/cdc/DallalNT16,DBLP:conf/hybrid/EhlersT14,DBLP:conf/mfcs/NeiderT020,DBLP:conf/csl/NeiderW018,DBLP:conf/hybrid/TopcuOLM12}.
In such situations, the system player should exploit the environment's ``bad'' moves, i.e., actions that permit the system player to achieve a value greater than $b$, and adapt its strategy at runtime.

We present two novel synthesis algorithms for $\rLTL$ specifications that ensure that the resulting systems are \emph{robust by construction} (in addition to being correct by construction).
These are based on two refined non-Boolean notions of winning strategies for $\rLTL$ games which both optimize the satisfaction of the specification.

The first notion, named \emph{adaptive strategies}, uses automata-based runtime verification techniques~\cite{runtime_verif} to monitor plays, detect bad moves of the environment, and adapt the actions of the system player to optimize the satisfaction of the winning condition.
The second notion, named \emph{strongly adaptive strategies}, is an extension of the first one that, in addition to being adaptive, also seeks to maximize the opportunity for the environment player to make bad moves.
We show that both types of strategies can be computed using methods from automata theory and result in effective synthesis algorithms for reactive systems that are robust by construction and adapt to the environment at runtime.

After recapitulating $\rLTL$ in Section~\ref{section2}, we introduce adaptive strategies in Section~\ref{adaptive strategy} and show that one can compute such strategies in $\rLTL$ games in doubly-exponential time by reducing the problem to solving parity games~\cite{calude}.
In Section~\ref{sec:strongly-adaptive}, we then turn to strongly adaptive strategies.
It turns out that this type of strategy does not always exist, which we demonstrate through an example. 
Nevertheless, we give a doubly-exponential time algorithm that decides whether a strongly adaptive strategy exists, and, if this is the case, computes one. Our algorithm is based on reductions to a series of parity and obliging games~\cite{Obliging}.
As the $\LTL$ synthesis problem is 2EXPTIME-complete~\cite{LTL_synthesis}, which is a special case of the problems we consider here, computing both types of adaptive strategies is 2EXPTIME-complete as well.
Furthermore, the size of the (strongly) adaptive strategies our algorithms compute is at most doubly exponential, matching the corresponding lower bound for $\LTL$ games, demonstrating that this bound is tight.
Thus, our results show that adaptive robust-by-construction synthesis is asymptotically not harder than classical $\LTL$ synthesis.

All proofs omitted can be found in the appendix.


\paragraph*{Related Work.}

Robustness in reactive synthesis has been addressed in various forms.
A prominent example is work by Bloem et al.~\cite{DBLP:journals/acta/BloemCGHHJKK14}, which considers the synthesis of robust reactive systems from GR(1)-specifications.
In subsequent work, Bloem et al.~\cite{DBLP:journals/corr/BloemEJK14} have surveyed a large body of work on robustness in reactive synthesis and distilled three general categories:
\begin{enumerate*}[label={(\roman*)}]
	\item ``fulfill the guarantee as often as possible even if the environment assumption is violated'',
	\item ``if it is impossible to fulfill the guarantee, try to fulfill it whenever possible'' and
	\item ``help the environment to fulfill the assumption if possible''.
\end{enumerate*}
Prototypical examples include the work by Topcu et al.~\cite{DBLP:conf/hybrid/TopcuOLM12}, Ehlers and Topcu~\cite{DBLP:conf/hybrid/EhlersT14}, Chatterjee and Henzinger~\cite{DBLP:conf/tacas/ChatterjeeH07}, Chatterjee et al.~\cite{Obliging}, and Bloem et al.~\cite{DBLP:conf/cav/BloemCHJ09}. 

However, our notion of adaptive strategies is more closely related to the notion of subgame perfect equilibrium~\cite{subgame}. A strategy profile is a subgame perfect equilibrium if it represents a Nash equilibrium of every subgame of the original game, i.e., the strategies are not only required to be optimal for the initial vertex but for every possible initial history of the game. Subgame perfect equilibria have been well studied in the context of graph games with LTL objectives~\cite{subgame,subgame_ltl}. Our results on adaptive strategies can be used to extend this concept to games with rLTL objectives. In particular, a pair of adaptive strategies for both players in such games forms a subgame perfect equilibrium and vice versa. Moreover, a subgame perfect equilibrium in such games is more fine-grained than subgame perfect equilibria in games with LTL objectives due to the many-valued semantics of rLTL.
On the other hand, our notion of strongly adaptive strategies is more general than subgame perfect equilibria. 

Also, the work of Almagor and Kupferman~\cite{DBLP:conf/cav/AlmagorK20} is very similar to our notion of adaptive strategies. They introduced the notion of good-enough synthesis that is considered over a multi-valued semantics where the goal is to compute a strategy that achieves the highest possible satisfaction value.
While some of the methods mentioned above do adapt to non-antagonistic behavior of the environment, we are not aware of any approach that would additionally optimize for the opportunities of the environment to act non-antagonistically, as our notion of strongly adaptive strategies does.

Quantitative objectives in graph-based games (and their combination with qualitative ones) have a rich history.
Among the most prominent examples are mean-payoff parity games~\cite{DBLP:conf/lics/ChatterjeeHJ05} and energy parity games~\cite{DBLP:journals/tcs/ChatterjeeD12}.
The former type of game combines a parity winning condition (as the canonical representation for $\omega$-regular properties) with a real-valued payout whose mean is to be maximized, while the latter type seeks to satisfy an $\omega$-regular winning condition with the quantitative requirement that the level of energy during a play must remain positive.
However, to the best of our knowledge, research in this field has focused on worst-case analyses with antagonistic environments.

Our notion of (strongly) adaptive strategies relies on central concepts introduced in the logic $\rLTL$~\cite{DBLP:conf/csl/TabuadaN16}, a robust, many-valued extension of LTL.
One of $\rLTL$'s key features is its syntactic similarity to LTL, which allows for a seamless and transparent transition from specifications expressed in LTL to specifications expressed in $\rLTL$.
Moreover, it is worth mentioning that rLTL has spawned numerous follow-up works, including $\rLTL$ model checking~\cite{DBLP:conf/hybrid/AnevlavisNPT19,DBLP:conf/cdc/AnevlavisPNT18,DBLP:journals/tocl/AnevlavisPNT22}, $\rLTL$ runtime monitoring~\cite{DBLP:conf/hybrid/MascleNSTW020}, and robust extensions of prompt LTL and Linear Dynamic Logic~\cite{NEIDER2021104810}, as well as CTL~\cite{NRZ22}.

Finally, let us highlight that preliminary results on adaptive strategies have been presented as a poster at the 24th ACM International Conference on Hybrid Systems: Computation and Control~\cite{DBLP:conf/hybrid/NayakN021}.

\section{Preliminaries}\label{section2}
In this section, we describe the syntax and semantics of Robust $\LTL$ and how it is different from classical $\LTL$. Moreover, we discuss some important results on $\rLTL$ and introduce games with $\rLTL$ specifications.

\paragraph{Robust Linear Temporal Logic.} 
We assume that the reader is familiar with Linear Temporal Logic~\cite{LTL}. We fix a finite non-empty set $\mathcal{P}$ of atomic propositions. The syntax of $\rLTL$ is similar to that of $\LTL$ with the only difference being the use of dotted temporal operators in order to distinguish them from $\LTL$ operators. More precisely, $\rLTL$ formulas are inductively defined as follows:

\begin{itemize}
	\item each $p\in \mathcal{P}$ is an $\rLTL$ formula, and
	\item if $\varphi$ and $\psi$ are $\rLTL$ formulas, so are $\neg \varphi$, $\varphi \vee \psi$, $\varphi \wedge \psi$, $\varphi \Rightarrow \psi$, $\Xdot \varphi$ (``next''), $\Boxdot \varphi$ (``always''), $\Diamonddot \varphi$ (``eventually''), $\varphi\Rdot \psi$ (``release'') and $\varphi\Udot \psi$ (``until'').
\end{itemize}

As already discussed, $\rLTL$ uses the set~$\valueset = \{1111,0111,0011,0001,0000\}$ of truth values, which are ordered as follows:
\[1111>0111>0011>0001>0000.\]
Intuitively, $1111$ corresponds to ``true'', and the other four values correspond to different degrees of ``false''.

The $\rLTL$ semantics is a mapping $\valuation$, called \emph{valuation}, that maps an infinite word $\alpha \in (2^{\mathcal{P}})^{\omega}$ and an $\rLTL$ formula $\varphi$ to an element of $\valueset$. 
Before we define the semantics, we need to introduce some useful notation.
Let $\alpha = \alpha_0\alpha_1\cdots \in (2^{\mathcal{P}})^{\omega}$ be an infinite word. For $i\in \mathbb{N}$, let $\alpha_{i\ldots} = \alpha_i\alpha_{i+1}\cdots$ be the (infinite) suffix of $\alpha$ starting at position~$i$. Also, for $1\leq k \leq 4$, we let $V_k(\alpha, \varphi)$ denote the $k$-th entry of  $\valuation(\alpha, \varphi)$, i.e., $\valuation(\alpha, \varphi) = V_1(\alpha, \varphi) V_2(\alpha, \varphi) V_3(\alpha, \varphi) V_4(\alpha, \varphi)$.
Now, $V$ is defined inductively as follows, where the semantics of Boolean connectives relies on \textit{da Costa algebras}~\cite{da_costa}:
\begin{align*}
	\valuation(\alpha, p) & = \begin{cases} 0000 & \text{if $p\not \in \alpha_0$} \\ 1111 & \text{if $p \in \alpha_0$} \end{cases} &
	\valuation(\alpha, \lnot \varphi) & = \begin{cases} 0000 & \text{if $\valuation(\alpha, \varphi) = 1111$} \\ 1111 & \text{otherwise} \end{cases} \\
	\valuation(\alpha, \varphi \lor \psi) & = \max{\bigl\{ \valuation(\alpha, \varphi), \valuation(\alpha, \psi) \bigr\}} &
	\valuation(\alpha, \varphi \Rightarrow \psi) & = \begin{cases} 1111 & \text{if $\valuation(\alpha, \varphi)\leq \valuation(\alpha, \psi)$} \\ \valuation(\alpha, \psi) & \text{otherwise} \end{cases} \\
	\valuation(\alpha, \varphi \land \psi) & = \min{\bigl\{ \valuation(\alpha, \varphi), \valuation(\alpha, \psi)\bigr\}} &
	\valuation(\alpha,\Xdot \varphi) & = \valuation(\alpha_{1\ldots}, \varphi) \\
	\valuation(\alpha,\Boxdot \varphi) & = \parbox{0pt}{$\displaystyle \left(\inf_{i\geq 0} V_1(\alpha_{i\ldots}, \varphi), \adjustlimits \sup_{j\geq 0} \inf_{i\geq j} V_2(\alpha_{i\ldots}, \varphi), \adjustlimits \inf_{j\geq 0} \sup_{i\geq j} V_3(\alpha_{i\ldots}, \varphi), \sup_{i\geq 0} V_4(\alpha_{i\ldots}, \varphi)\right)$} \\
	\valuation(\alpha,\Diamonddot \varphi) & = \parbox{0pt}{$\displaystyle \left(\sup_{i\geq 0} V_1(\alpha_{i\ldots}, \varphi), \sup_{i\geq 0} V_2(\alpha_{i\ldots}, \varphi), \sup_{i\geq 0} V_3(\alpha_{i\ldots}, \varphi), \sup_{i\geq 0} V_4(\alpha_{i\ldots}, \varphi)\right)$}
\end{align*}
The semantics for the temporal operators $\Udot$ and $\Rdot$ can be generalized similarly.
We refer the reader to Tabuada and Neider~\cite{DBLP:conf/csl/TabuadaN16} for more details.

\begin{example}
We can see that for the formula $\Boxdot p$, the valuation $\valuation(\alpha,\Boxdot p)$ can be expressed in terms of the $\LTL$ valuation function~$W$ by 
\[\valuation(\alpha,\Boxdot p) = W(\alpha,\Box p) W(\alpha, \Diamond \Box p) W(\alpha,\Box \Diamond p) W(\alpha,\Diamond p).\]
This evaluates to different values in $\valueset$ distinguishing various degrees of violations as seen in Section~\ref{section1}. 
\end{example}

\begin{example}
Now let us see how the $\rLTL$ semantics for a specification of the form $\varphi\Rightarrow\psi$ captures robustness. Consider an instance where the environment assumption $\varphi$ is $\Boxdot p$ and the system guarantee $\psi$ is $\Boxdot q$ and assume the specification $\Boxdot p \Rightarrow \Boxdot q$ evaluates to $1111$ for some infinite word. Let us see how the system behaves in response to various degrees of violation of the environment assumption.
\begin{itemize}
	\item If $p$ holds at all positions, then $\Boxdot p$ evaluates to $1111$. Hence, by the semantics of implication, $\Boxdot q$ also evaluates to $1111$, which means $q$ holds at all positions. Therefore, the desired behavior of the system is retained when the environment assumption holds with no violation.
	
	\item If $p$ holds eventually always but not always (a minor violation of $\Boxdot p$), then $\Boxdot p$ evaluates to $0111$. Hence, $\Boxdot q$ evaluates to $0111$ or higher, meaning that $q$ also needs to hold eventually always.
	
	\item Similarly, if $p$ holds at infinitely (finitely) many positions, then $q$ needs to hold at infinitely (finitely) many positions.
\end{itemize}
Hence, the semantics of $\Boxdot p \Rightarrow \Boxdot q$ captures the robustness property as desired. Furthermore, if $\Boxdot p \Rightarrow \Boxdot q$ evaluates to $b<1111$, then $\Boxdot p$ evaluates to a higher value than $b$, whereas $\Boxdot q$ evaluates to $b$. So, the desired system guarantee is not satisfied. However, the value of $\Boxdot p \Rightarrow \Boxdot q$ still describes which weakened guarantee follows from the environment assumption.

\end{example}

\paragraph{From rLTL to B\"uchi Automata.}
Given an $\LTL$ formula $\varphi$, a generalized B\"uchi automaton (see \cite{Buchi-Parity} for a definition) with $O(2^{\cl})$ states and $O(\cl)$ accepting sets can be constructed that recognizes the infinite words satisfying $\varphi$~\cite{LTL_Buchi},  where $\cl$ denotes the number of subformulas of $\varphi$. Using a similar method, Tabuada and Neider obtained the following result.

\begin{theorem}[\cite{DBLP:conf/csl/TabuadaN16}]\label{thm:rLTL-Buchi}
	Given an $\rLTL$ formula $\varphi$ and a set of truth values $B\subseteq \valueset$, one can construct a generalized B\"uchi automaton $\mathcal{A}$ with $2^{O(\cl)}$ states and $O(\cl)$ accepting sets that recognizes the infinite words on which the value of $\varphi$ belongs to $B$, i.e., $L(\mathcal{A}) = \{w\in (2^{\mathcal{P}})^{\omega}\mid \valuation(\alpha,\varphi) \in B\}$.
\end{theorem}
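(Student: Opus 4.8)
The plan is to reduce the problem for $\rLTL$ to the corresponding well-known construction for $\LTL$. The key observation is that the many-valued $\rLTL$ semantics is, in a precise sense, ``four $\LTL$ formulas bundled together'': for any $\rLTL$ formula $\varphi$ there is, for each bit position $k \in \{1,2,3,4\}$, an $\LTL$ formula $\varphi^{(k)}$ over the same propositions such that $V_k(\alpha, \varphi) = W(\alpha, \varphi^{(k)})$ for all $\alpha$. This follows by induction on the structure of $\varphi$, mirroring the inductive definition of $\valuation$: atoms map to themselves in every bit, negation of $\varphi$ at bit $k$ is $\neg \varphi^{(1)}$ (since $V_k(\alpha,\neg\varphi)$ depends only on whether $\valuation(\alpha,\varphi) = 1111$), conjunction/disjunction distribute bitwise into $\wedge/\vee$ (because $\min$/$\max$ on $\valueset$ act coordinatewise on the bitstrings, using that these bitstrings are monotone), implication $\varphi \Rightarrow \psi$ at bit $k$ becomes $(\bigwedge_{j\le k}\varphi^{(j)}) \Rightarrow \psi^{(k)}$ or a similar Boolean combination capturing the case distinction, and the temporal operators unfold exactly as shown in the displayed semantics for $\Boxdot$ and $\Diamonddot$ (e.g.\ bit $1$ of $\Boxdot\varphi$ is $\Box \varphi^{(1)}$, bit $2$ is $\Diamond\Box \varphi^{(2)}$, etc.), with $\Udot$ and $\Rdot$ handled by their analogous generalizations, and $\Xdot$ commuting with every bit. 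Crucially, $|\varphi^{(k)}| = O(|\varphi|)$: each inductive step blows up the formula by at most a constant factor and adds a bounded number of temporal operators, so the size stays linear in $\cl$ (this is exactly the bookkeeping carried out by Tabuada and Neider).

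Next, fix the target set $B \subseteq \valueset$. Since $\valueset$ is finite and linearly ordered, membership of $\valuation(\alpha,\varphi)$ in $B$ is a finite Boolean combination of conditions of the form ``bit $k$ of $\valuation(\alpha,\varphi)$ equals $1$'', hence a Boolean combination of the $\varphi^{(k)}$. Concretely, one writes $\psi_B \coloneqq \bigvee_{b \in B} \bigl( \bigwedge_{k : b_k = 1} \varphi^{(k)} \wedge \bigwedge_{k : b_k = 0} \neg \varphi^{(k)} \bigr)$, an $\LTL$ formula with $|\psi_B| = O(\cl)$ (the disjunction has at most $|B| \le 5$ disjuncts, each of size $O(\cl)$), satisfying $W(\alpha,\psi_B) = 1$ iff $\valuation(\alpha,\varphi) \in B$. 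Then invoke the classical translation: from $\psi_B$ one constructs a generalized B\"uchi automaton with $2^{O(|\psi_B|)} = 2^{O(\cl)}$ states and $O(|\psi_B|) = O(\cl)$ accepting sets recognizing exactly the models of $\psi_B$, which is the desired automaton $\mathcal{A}$.

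The main obstacle — and the only part requiring genuine care — is establishing the bitwise decomposition $V_k(\alpha,\varphi) = W(\alpha,\varphi^{(k)})$ together with the linear size bound, particularly checking that the da~Costa-algebra semantics of $\wedge$, $\vee$, $\neg$, and $\Rightarrow$ really does act coordinatewise on the bitstrings in $\valueset$ (this relies on the fact that the elements of $\valueset$ are exactly the monotone bitstrings $1^i 0^{4-i}$, so $\min$ and $\max$ preserve the shape and coincide with coordinatewise $\wedge$/$\vee$), and that the case-split for $\Rightarrow$ can be flattened into a fixed Boolean combination of the component formulas. Once that structural lemma is in place, the rest is immediate from the known $\LTL$-to-B\"uchi construction, and the size bounds follow by a routine count. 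An alternative, if one prefers to avoid reproving the structural lemma in full, is to cite it directly from \cite{DBLP:conf/csl/TabuadaN16} and only verify the size bookkeeping and the construction of $\psi_B$ from the components.
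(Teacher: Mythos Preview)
The paper does not actually prove this theorem; it is quoted from Tabuada and Neider~\cite{DBLP:conf/csl/TabuadaN16} with only the one-line remark that their result is obtained ``using a similar method'' to the classical $\LTL$-to-generalized-B\"uchi translation. So there is no in-paper argument to compare against beyond that hint.

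Your route is correct and, in fact, makes explicit what the paper only gestures at. The bitwise decomposition $V_k(\alpha,\varphi)=W(\alpha,\varphi^{(k)})$ with $|\varphi^{(k)}|=O(\cl)$ is exactly the structural lemma established in the cited work, and once it is in hand the construction of $\psi_B$ and the appeal to the standard $\LTL$ translation are straightforward. Two small corrections: the elements of $\valueset$ are of the form $0^i1^{4-i}$ rather than $1^i0^{4-i}$ (this does not affect your coordinatewise $\min$/$\max$ argument), and your tentative clause $(\bigwedge_{j\le k}\varphi^{(j)})\Rightarrow\psi^{(k)}$ for implication is not the right one---by monotonicity of the bits it collapses to $\varphi^{(1)}\Rightarrow\psi^{(k)}$ and already fails at bit~$1$ when $\varphi$ evaluates to $0111$ and $\psi$ to $0011$. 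The correct shape is $\bigl(\bigwedge_{j}(\varphi^{(j)}\rightarrow\psi^{(j)})\bigr)\vee\psi^{(k)}$ or an equivalent constant-size Boolean combination, which still keeps the overall translation linear; your hedge (``or a similar Boolean combination'') already anticipated this.

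As to whether this is the \emph{same} approach as the cited original: Tabuada and Neider present a direct five-valued generalization of the Fischer--Ladner tableau construction, rather than a literal translation to $\LTL$ followed by the off-the-shelf automaton. The two routes are equivalent in spirit and give identical bounds; yours has the advantage of reducing to a black box, theirs of being self-contained.
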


\paragraph{rLTL Games.}

\label{section3}

We consider infinite-duration two-player games over finite graphs with $\rLTL$ specifications. Here, we assume basic familiarity with games on graphs. 
Formally, an $\rLTL$ game $\game = (\arena, \varphi)$ consists of 
	\begin{enumerate*}[label=(\roman*)]
		\item a finite, directed, labelled arena $\arena = (V,E,\lambda)$ with $V = V_0 \cupdot V_1$, an edge relation~$E \subseteq V \times V$, and a labelling function $\lambda \colon V \rightarrow 2^{\mathcal{P}}$, and
		\item an $\rLTL$ formula $\varphi$ over $\mathcal{P}$.
	\end{enumerate*}
The game is played by two players, Player~$0$ and Player~$1$, who construct a \textit{play} $\rho = v_0v_1\cdots \in V^\omega$ by moving a token along the edges of the arena. A~play $\rho = v_0v_1\cdots $ induces an infinite word $\lambda(\rho) = \lambda(v_0)\lambda(v_1)\cdots \in (2^{\mathcal{P}})^{\omega}$, and the \textit{value} of the play, denoted by $\valuation(\rho)$, is the value of the formula $\varphi$ on $\lambda(\rho)$. Player~$0$'s objective is to maximize this value, while Player~$1$'s objective is to minimize it. 

\paragraph*{Strategies.}
A \textit{play prefix} is a finite, nonempty path $\p\in V^*$ in the arena. 
Then, a \textit{strategy} for Player~$i$, $i\in \{0,1\}$, is a function $\sigma\colon V^*V_i \rightarrow V$ mapping each play prefix~$\p$ ending in a vertex in $V_i$ to one of its successors.
Intuitively, a strategy prescribes Player~$i$'s next move depending on the play prefix constructed so far.

A strategy $\sigma$ is \textit{memoryless} if it only depends on the last vertex, i.e., for every prefix~$\p$ ending in vertex~$v$, it holds that $\sigma(\p) = \sigma(v)$. Moreover, we say a strategy has \textit{memory size} $m$ if there exists a finite state machine with output with $m$ states computing the strategy (see Grädel et al.~\cite{Automata_Book} for more details).

Next we define the plays that are consistent with a given strategy for Player~$i$.
Typically, this means that the token is placed at some initial vertex and then, whenever a vertex of Player~$i$ is reached, then Player~$i$ uses the move prescribed by the strategy for the current play prefix to extend this prefix.
Note that the strategy does not have control over the initial placement of the token.

Here we will use a more general notion, inspired by previous work in optimal strategies for Muller games~\cite{FJ} and in subgame perfect equilibria in graph games~\cite{subgame,subgame_ltl}: the initial prefix over which the strategy does not have control over might be longer than just the initial vertex. 
This means strategies are also applicable to prefixes that where not constructed according to the strategy. 
However, crucially, the strategy still gets access to that prefix and therefore can base its decisions on the prefix it had no control over. 
This generality will turn out to be useful both when defining adaptive strategies and when combining strategies to obtain adaptive strategies.

Formally, for a play prefix~$\p=v_0v_1\cdots v_n$ and a strategy $\sigma$ for Player~$i$, a play~$\rho$ is a $(\sigma, \p)$-play if $\rho =\p v_{n+1}v_{n+2}\cdots$ with $v_{k+1} = \sigma(v_0v_1\cdots v_k)$ for all $v_k\in V_i$ with $k\geq n$. Note that the prefix~$\p$ is arbitrary here, i.e., it might not have been constructed following the strategy~$\sigma$. 
Moreover, a $(\sigma,\p)$-play prefix $\p\p'$ is a prefix of a $(\sigma,\p)$-play.
We say that a play~$\rho$ starting in some vertex~$v$ is consistent with $\sigma$, if it is a $(\sigma, v)$-play (which is the classical notion of consistency).
Finally, a play prefix~$\p$ is consistent with $\sigma$ if it is the prefix of some play that is consistent with $\sigma$.

In the paper introducing $\rLTL$~\cite{DBLP:conf/csl/TabuadaN16}, Tabuada and Neider gave a doubly-exponential time algorithm that solves the classical $\rLTL$ synthesis problem, which is equivalent to solving the following problem.
\begin{problem}
	Given an $\rLTL$ game $\game$, an initial vertex~$v_0$ and a truth value~$b\in \valueset$, compute a strategy $\sigma$ (if one exists at all) for Player~$0$ such that every $(\sigma,v_0)$-play has value at least $b$.
\end{problem}

Note that Tabuada and Neider were interested in strategies for Player~$0$ that enforce the value~$b$ from $v_0$, i.e., strategies such that every consistent play starting in the given initial vertex has at least value~$b$.
In contrast, we will compute strategies that are improvements in two dimensions: 
\begin{enumerate*}[label=(\roman*)]
\item they enforce the optimal value rather than a given one, and
\item they do so from every possible play prefix, even if they did not have control over the prefix.
\end{enumerate*}

\section{Adaptive Strategies}\label{adaptive strategy}
In this section, we start by presenting a motivating example, a game in which classical strategies for Player~$0$ are not necessarily optimal (in an intuitive sense). We then formalize this intuition by introducing adaptive strategies and give a doubly-exponential time algorithm to compute such strategies. 

\paragraph*{Motivating Example.}\label{sec: Motivating adaptive}
Consider the arena given in \autoref{fig:motivate adaptive}
(where Player~$0$'s vertices are shown as circles and Player~$1$'s vertices are shown as squares) with the $\rLTL$ specification~$\varphi =\Boxdot p$.

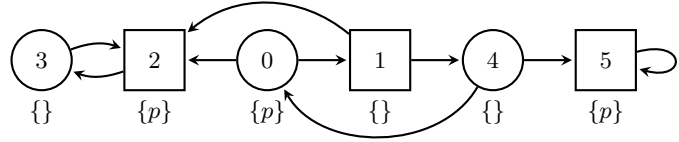
\begin{figure}[t!]
	\centering
	\begin{tikzpicture}
		\node[player0, label={below:$\{ p \}$}] (0) at (0, 0) {$0$};
		\node[player1, label={below:$\{  \}$}] (1) at (1.5, 0) {$1$};
		\node[player0, label={below:$\{  \}$}] (4) at (3, 0) {$4$};
		\node[player1, label={below:$\{ p \}$}] (5) at (4.5, 0) {$5$};
		\node[player1, label={below:$\{ p \}$}] (2) at (-1.5, 0) {$2$};
		\node[player0, label={below:$\{  \}$}] (3) at (-3, 0) {$3$};

		\path[->] (0) edge (1) edge (2);
		\path[->] (1) edge (4) edge[bend right =40] (2);
		\path[->] (4) edge (5) edge[bend left=60] (0);
		\path[->] (5) edge[loop right] ();
		\path[->] (2) edge[bend left=20] (3);
		\path[->] (3) edge[bend left=20] (2);
		
	\end{tikzpicture}
	\caption{First motivating example for adaptive strategies}
	\label{fig:motivate adaptive}
\end{figure}

Suppose the token is initially placed at vertex~$0$. Considering Player~$1$ plays optimally, the token would eventually reach vertex~$2$, from which the best possible scenario for Player~$0$ is to enforce a play where $p$ holds at infinitely many positions. As the classical problem only considers the worst-case analysis, a classical strategy for Player~$0$ is to try to visit vertex~$2$ infinitely often. That can be done by moving the token along one of the following edges every time the token reaches Player~$0$'s vertices: $\{0\rightarrow 1; 3\rightarrow 2; 4\rightarrow 0\}$. Note that the move $4\rightarrow 0$ is irrelevant in this worst-case analysis, as vertex~$4$ is never reached if Player~$1$ plays optimally. 

Suppose Player~$1$ makes a bad move by moving along $1\rightarrow 4$. Then, Player~$0$ can force the play to eventually just stay at vertex~$5$, and hence, $p$ holds almost always. However, the above classical strategy for Player~$0$ moves the play back to vertex~$0$, from which $p$ might not hold almost always. Therefore, a better strategy for Player~$0$ is to move along $4\rightarrow 5$ if the token reaches vertex~$4$ to get a play where $p$ holds almost always; otherwise, enforce a play where $p$ holds at infinitely many positions as earlier by moving along $0\rightarrow 1$ and then $3\rightarrow 2$ repeatedly. 

In the worst case, i.e., if Player~$1$ does not make a bad move by reaching vertex~$4$, both strategies yield value~$0011$.
However, if Player~$1$ does make a bad move by reaching vertex~$4$, the second strategy achieves value~$0111$ on some plays, while the second one does not.
So, in the worst case analysis, both strategies are equally good, but if we assume that Player~$1$ is not necessarily antagonistic, then the second strategy is better as it is able to exploit the bad move by Player~$1$.
We call such a strategy \textit{adaptive} as it adapts its moves to achieve the best possible outcome after each bad move of the opponent. We will formalize this shortly.

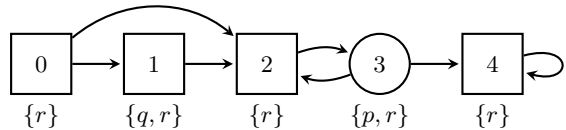
\begin{figure}[b!]
	\centering
	\begin{tikzpicture}
		\node[player1, label={below:$\{ r\}$}] (0) at (0, 0) {$0$};
		\node[player1, label={below:$\{ q,r \}$}] (1) at (1.5, 0) {$1$};
		\node[player1, label={below:$\{ r\}$}] (2) at (3, 0) {$2$};
		\node[player0, label={below:$\{ p,r \}$}] (3) at (4.5, 0) {$3$};
		\node[player1, label={below:$\{ r \}$}] (4) at (6,0) {$4$};

		\path[->] (0) edge (1) edge[bend left=40] (2);
		\path[->] (1) edge (2);
		\path[->] (2) edge[bend left=20] (3);
		\path[->] (3) edge[bend left=20] (2) edge(4);
		\path[->] (4) edge[loop right] ();
		
	\end{tikzpicture}
	\caption{Second motivating example for adaptive strategies}
	\label{fig:motivate adaptive2}
\end{figure}

To further illustrate the notion of adaptive strategies, consider another game with the arena shown in \autoref{fig:motivate adaptive2} and with $\rLTL$ specification $\varphi' = (\Xdot \neg q \Rightarrow \Boxdot p)\wedge (\Xdot q \Rightarrow \Boxdot r)$.
In this example, Player~$1$ has only two strategies starting from vertex~$0$: one moving the token along $0\rightarrow 1$ and one moving the token along $0\rightarrow 2$.

The best truth value Player~$0$ can enforce in this game is $0011$. 
This is because Player~$1$ can move along the edge~$0\rightarrow 2$, which satisfies the second implication with value~$1111$ (as $q$ does not occur), but also satisfies the premise of the first implication with value~$1111$. 
Hence, the value of the whole formula is the value of the subformula~$\Boxdot p$.
The best value Player~$0$ can achieve for it is indeed $0011$ by looping between vertices~$3$ and $2$.
His only other choice, i.e., to move to $4$ eventually, only results in the value~$0001$.

However, if Player~$1$ does not take the edge~$0\rightarrow 2$ but instead moves to vertex~$2$ via vertex~$1$, Player~$0$ can gain from this \emph{bad move} by instead moving to vertex~$4$.
In that case, the formula is satisfied with truth value~$1111$. 
Thus, a strategy that adapts to the bad move by the opponent can achieve a better value than one that does not, if they do make a bad move.

\subsection{Definitions}
Recall that a $(\sigma,\p)$-play for a strategy~$\sigma$ for Player~$i$ and a play prefix~$\p$ (not necessarily consistent with $\sigma$) is an extension of $\p$ by $\sigma$, i.e., Player~$i$ uses the strategy~$\sigma$ to extend the play prefix~$p$ \new{they} had not control over, while still taking the prefix~$\p$ into account when making the decisions.
We say that a strategy $\sigma$ for Player~$0$ \textit{enforces} a truth value of $b$ from a play prefix~$\p$, if we have $\valuation(\rho) \geq b$ for every $(\sigma, \p)$-play~$\rho$. 
Similarly, we say a strategy $\tau$ for Player~$1$ enforces a truth value of $b$ from a play prefix~$\p$, if we have $\valuation(\rho) \leq b$ for every $(\tau, \p)$-play~$\rho$. 
This conforms to our intuition that Player~$0$ tries to maximize the truth value while Player~$1$ tries to minimize it.
Moreover, we say Player~$i$ can enforce a value $b$ from some prefix~$\p$ if \new{they have} a strategy that enforces $b$ from $\p$.

\begin{remark}
Let $\p$ be a play prefix. If Player~$0$ can enforce value~$b_0$ from $\p$ and Player~$1$ can enforce value~$b_1$ from $\p$ then $b_0 \le  b_1$	.
\end{remark}

For example, consider the game given in \autoref{fig:motivate adaptive} with the $\rLTL$ specification~$\Boxdot p$. Using the analysis given in Section~\ref{sec: Motivating adaptive}, we can see that Player~$0$ can enforce $0111$ and $0011$ from prefixes $014$ and $012$, respectively, by moving the token along $\{0\rightarrow 1; 4\rightarrow 5; 3\rightarrow 2\}$. It is easy to check that these are the best values Player~$0$ can enforce from those prefixes as Player~$1$ can enforce the same values from these prefixes.

We are interested in a strategy that enforces the best possible value from each play prefix. This is formalized as follows.

\begin{definition}[Adaptive Strategies]\label{def:adaptive}
	In an $\rLTL$ game, a strategy $\sigma_0$ for Player~$0$ is \textit{adaptive} if from every play prefix~$\p$, no strategy for Player~$0$ enforces a better truth value than $\sigma_0$, 
	that is, if some strategy $\sigma$ for Player~$0$ enforces a truth value of $b$ from $\p$, then $\sigma_0$ also enforces the value $b$ from $\p$.
\end{definition} 

Note that $\p$ is not required to be consistent with $\sigma_0$ in the above definition, i.e., an adaptive strategy achieves the best possible outcome from every possible play prefix (even for those it had no control over when they are constructed).
Also, let us mention that a dual notion can be defined for Player~$1$. 

As mentioned earlier, one can notice that the concept of adaptive strategies is similar to the concept of subgame perfect equilibrium~\cite{subgame}. Furthermore, adaptive strategies can be used to extend the notion of subgame perfect equilibrium to rLTL games as in the following remark.

\begin{remark}
Given an rLTL game, a pair of adaptive strategies for both players forms a subgame perfect equilibrium for the game, and vice versa.
\end{remark}

Here, we prefer the notion of adaptive strategies over the notion of subgame perfect equilibria, as we focus on strategies for Player~$0$ and generally disregard strategies of Player~$1$.

\subsection{Computing Adaptive Strategies}\label{solving rLTL games}
Now, to synthesize an adaptive strategy, we need to monitor the bad moves of the opponent at runtime by keeping track of the best value that can be enforced from the current play prefix. 
To do that, using the idea of automata-based runtime verification~\cite{runtime_verif}, we construct multiple parity automata to monitor the bad moves of the opponent and then we synthesize adaptive strategies by using a reduction to parity games (see \cite{Automata_Book} for definitions). 

Given an $\rLTL$ game $\game = (\arena, \varphi)$ with $\arena = (V,E,\lambda)$, we proceed as follows:
\begin{enumerate}
	\item We construct  generalized (non-deterministic) B\"uchi automata $\mathcal{A}^b$ such that $L(\mathcal{A}^b) =\{w \in  (2^\mathcal{P})^{\omega} \mid \valuation(w, \varphi) \geq b\}$ for all $b\in \valueset$. 
	
	\item We determinize each $\mathcal{A}^b$ to obtain a deterministic parity automaton~$\mathcal{C}^b$ with the same language.
	
	\item For each $b$, we construct a parity game $\game^b$ by taking the product of the arena~$\arena$ and the parity automaton~$\mathcal{C}^b$.
	
	\item We solve the above parity games~$\game^b$~\cite{calude}, yielding, for each truth value~$b$, a finite-state  winning strategy for the original game~$\game$ with value~$b$ (if one exists). 
	
	\item We combine all these winning strategies for Player~$0$ computed in the last step to obtain an adaptive strategy $\sigma$ for Player~$0$.
	
\end{enumerate}

Let us now explain each step in more detail.

\paragraph{Step 1.} We construct the generalized non-deterministic B\"uchi automata $\mathcal{A}^b$ such that $L(\mathcal{A}^b) =\{w \in  (2^\mathcal{P})^{\omega} \mid \valuation(w, \varphi) \geq b\}$ for all $b\in \valueset$. By \autoref{thm:rLTL-Buchi}, the automaton $\mathcal{A}^b$ has \new{$2^{O(\cl)}$} states and \new{$O(\cl)$} accepting sets.

\paragraph{Step 2.} We determinize each $\mathcal{A}^b$ to get a deterministic parity automaton $\mathcal{C}^b = (Q^b, 2^\mathcal{P},q_0^b,\delta^b,\Omega^b)$ with \new{$2^{2^{O(\cl)}}$} states and \new{$2^{O(\cl)}$} colors~\cite{Buchi-Parity}.

\paragraph{Step 3.} We construct the (unlabelled) product arena~$\arena^b = (V^b,E^b)$ of the arena~$\arena = (V,E,\lambda)$ and the parity automaton $\mathcal{C}^b$ such that $V^b = V \times Q^b$, $V_i^b = V_i \times Q^b$ for $i \in \set{0,1}$, and
\[((v,q),(v',q'))\in E^b \text{ if and only if } (v,v')\in E \text{ and } \delta^b(q, \lambda(v)) = q'.\]
The function $\bar{\Omega}^b$ assigns colors to the vertices such that $\bar{\Omega}^b(v,q) = \Omega^b(q)$.
The desired parity games are the $\game^b = (\arena^b, \bar{\Omega}^b)$ with $b \in \valueset$.

It is easy to verify that Player~$0$ wins a play $\rho' = (v_0, q^b_0)(v_1, q^b_1)\cdots$ in $\game^b$ if and only if the value of the play $\rho = v_0v_1\cdots$ in $\game$ is at least $b$. Furthermore, given a path $\rho = v_0v_1\cdots v_k$ in $\arena$, there is a unique path of the form $\rho' = (v_0, q_0^b)(v_1, q_1^b)\cdots (v_k,q_k^b)$ in $\arena^b$, that is when $q^b_{i+1} = \delta^b(q_i^b,v_i)$ for all $0\leq i \leq k-1$. 

Since winning a play in $\game^b$ is equivalent to the corresponding play in $\game$ satisfying $\varphi$ with truth value $b$ or greater, we can characterize the enforcement of $b$ in $\game$ by the winning region of Player~$0$ in $\game^b$, i.e., the set of vertices from which Player~$0$ has a winning strategy.
This can easily be shown by simulating a winning strategy from $(v,q)$ to extend the play prefix~$\p $ and vice versa. 

\begin{remark}
\label{remark_enforcementcharac}	
Fix a play prefix~$\p$ in the rLTL game~$\game$, and let $(v,q^b)$ be the last vertex of the corresponding play in the parity game~$\game^b$ for some $b$.
Then, Player~$0$ can enforce~$b$ from $\p$ if and only if $(v,q^b)$ is in his winning region of $\game^b$.
\end{remark}

\paragraph{Step 4.} We solve the resulting parity games $\game^b$ and determine the winning regions~$\win{\game^b}$ of Player~$0$ and  uniform memoryless winning strategies~$\sigma^{b}$ for Player~$0$ that are winning from every vertex in the corresponding winning region. The parity games have \new{$n = \abs{V} \cdot 2^{2^{O(\cl)}}$} vertices and \new{$k = 2^{O(\cl)}$} colors. Since \new{$k < \lg(n)$}, these can be solved in time \new{$O(n^5) =  \abs{V}^5 \cdot 2^{2^{O(\cl)}}$}~\cite{calude}.

\paragraph{Step 5.} Consider the extended $\rLTL$ game $\game' = (\arena',\varphi)$, where $\arena' = (V', E', \lambda')$  with $V' = V \times Q^{0000} \times \cdots \times Q^{1111}$, \begin{multline*}
	E' = \bigl\{ \left((v_1,q_1^{0000}, \ldots, q_1^{1111}),(v_2,q_2^{0000}, \ldots, q_2^{1111})\right) \mid \\
(v_1,v_2)\in E \text{ and } \delta^{b}(q_1^b, \lambda(v)) = q_2^b \text{ for all }b\in \valueset \bigr\},
\end{multline*}
	and $\lambda'$ such that $\lambda'(v,q^{0000},\ldots,q^{1111}) = \lambda(v)$ for all $v\in V$ and $ q^b\in Q^b$.

It is easy to see that there is a one to one correspondence between the plays in both games $\game$ and $\game'$. Besides that, the $\rLTL$ specification is also the same in both games. Therefore, computing an adaptive strategy in the game $\game$ is equivalent to computing one in the game $\game'$. Now using the analysis given in Step~$3$, we have the following in the $\rLTL$ game $\game'$:
\begin{itemize}
	\item[\textbullet] A vertex~$v'$ is in $\enfMore{b} = \{(v,q^{0000},\ldots,q^{1111})\in V'\mid (v,q^b)\in \win{\game^{b}}\}$ if and only if Player~$0$ can enforce $b$ from every play prefix in $\game'$ ending in~$v'$.

	\item[\textbullet] Using these sets, we now define the set~$\enfExact{b}$ of vertices from which the maximum value Player~$0$ can enforce is $b$.
	Formally, this set is given by 
	\[
	\enfExact{b} = \begin{cases}
 	\enfMore{1111} &\text{if $b =1111$,} \\
 	\enfMore{b}\setminus \enfMore{b+1} &\text{if $b < 1111$},
 \end{cases}\]
where $b+1$ is the smallest value bigger than~$b<1111$. Note that the sets~$\enfExact{b}$ form a partition of the vertex set of $\game'$.
\end{itemize}

Furthermore, it is easy to see that if a play $\rho$ satisfies a parity objective then every play sharing a suffix with $\rho$ also satisfies the parity objective. Since the game $\game'$ is a product of parity games and since we have characterized the enforcement of truth values via the membership in the winning regions of the parity games (see Remark~\ref{remark_enforcementcharac}), the next remark follows.

\begin{remark}\label{remark: two prefix same ending}
	In the $\rLTL$ game $\game'$, for two play prefixes $\p_1,\p_2$ ending in the same vertex, the following holds: if a memoryless strategy~$\sigma$ for Player~$0$ enforces a truth value $b$ from $\p_1$, then it also enforces the value $b$ from $\p_2$.
\end{remark}

Then, we can see that if the token stays in $\enfExact{b}$ for some $b$, then Player~$0$ can simulate the strategy $\sigma^b$ for $\game^b$ to enforce the value $b$ in $\game'$. Therefore, we obtain a memoryless adaptive strategy $\sigma$ for Player~$0$ in the game $\game'$ as follows: for every vertex~$(v,q^{0000},\ldots,q^{1111})$ in $\enfExact{b}$, we define $\sigma(v,q^{0000},\ldots,q^{1111})$ to be the unique successor of $(v,q^{0000},\ldots,q^{1111})$ in $\game'$ that corresponds to the successor~$\sigma^b(v,q^b)$ of $(v,q^b)$ in $\game^b$.
Thus, $\sigma$ simulates the strategy $\sigma^b$ for the largest $b$ such that the value $b$ can be enforced (which is exactly what $\sigma^b$ does from such a prefix).
Hence, it is an adaptive strategy for Player~$0$ in $\game'$.

Finally, using the strategy $\sigma$, one can compute a corresponding strategy in the game $\game$ with memory $Q^{0000}\times Q^{0001} \times \cdots \times Q^{1111}$, which is used to simulate the positional strategy~$\sigma$. 
The resulting finite-state strategy is an adaptive strategy for Player~$0$ in $\game$.

Note that the adaptive strategy in $\game$ is of doubly-exponential size in $|V|$ and $\cl$.
This upper bound is tight,  since there is a doubly-exponential lower bound on the size of winning strategies strategies for $\LTL$ games~\cite{LTL_synthesis_memory}, which can be lifted to $\rLTL$ games.

Similarly, one can compute an adaptive strategy for Player~$1$. 
Hence, an adaptive strategy for both players in an $\rLTL$ game can be computed in time doubly-exponential in the size of the formula.  

\begin{theorem}
	Given an $\rLTL$ game, an adaptive strategy of a player can be computed in doubly-exponential time. Moreover, each player has an adaptive strategy with doubly-exponential memory size.
\end{theorem}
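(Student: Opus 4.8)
# Proof Proposal

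The plan is to verify that the five-step construction described in the preceding text actually produces what the theorem claims, tracking the size and time bounds through each step. The argument is essentially a careful bookkeeping of the chain of reductions $\rLTL \to$ generalized Büchi $\to$ deterministic parity $\to$ parity game $\to$ memoryless strategy $\to$ adaptive strategy, together with a correctness check that the combined strategy $\sigma$ on the product arena $\arena'$ really is adaptive in the sense of Definition~\ref{def:adaptive}.

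First I would establish correctness. By Remark~\ref{remark_enforcementcharac}, for a play prefix~$\p$ of $\game'$ ending at a vertex $v'=(v,q^{0000},\ldots,q^{1111})$, Player~$0$ can enforce~$b$ from $\p$ if and only if $v' \in \enfMore{b}$; note this depends only on the last vertex $v'$, which is the key point that makes a memoryless adaptive strategy possible, and which is exactly what Remark~\ref{remark: two prefix same ending} records. Hence the maximal enforceable value from any prefix ending at $v'$ is the unique $b$ with $v'\in\enfExact{b}$, and the sets $\enfExact{b}$ partition $V'$. One then checks that the strategy $\sigma$ defined by following $\sigma^b$ on $\enfExact{b}$ enforces value $b$ from every prefix ending in $\enfExact{b}$: since $\sigma^b$ is a uniform memoryless winning strategy on $\win{\game^b}$ and the $\game^b$-component of the product evolves deterministically, every $(\sigma,v')$-play projects to a $\sigma^b$-consistent play from $(v,q^b)\in\win{\game^b}$, which wins the parity condition of $\game^b$ and therefore has $\rLTL$-value $\ge b$ by the Step~3 equivalence. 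Because $b$ is the best value enforceable from $v'$, no other strategy does better from any prefix ending at $v'$; by Remark~\ref{remark: two prefix same ending} this holds from \emph{every} such prefix, establishing that $\sigma$ is adaptive in $\game'$. The one-to-one correspondence between plays of $\game$ and $\game'$ (same induced words, same specification) then transfers adaptiveness back to $\game$, where $\sigma$ is realized as a finite-state strategy using memory $Q^{0000}\times\cdots\times Q^{1111}$.

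Next I would chase the quantitative bounds. Step~1: by Theorem~\ref{thm:rLTL-Buchi}, each $\mathcal{A}^b$ has $n = 2^{O(\cl)}$ states. Step~2: Büchi-to-parity determinization gives $\mathcal{C}^b$ with $2^{O(n\log n)}$ states and $O(n)$ colors, so the memory $Q^{0000}\times\cdots\times Q^{1111}$ has size $\bigl(2^{O(n\log n)}\bigr)^{5} = 2^{O(n\log n)}$, which is doubly-exponential in $\cl$. Step~3: the product arena $\arena^b$ has $n_p = \abs{V}\cdot 2^{O(n\log n)}$ vertices and $k_p = O(n)$ colors; since $k_p < \lg n_p$, the quasi-polynomial parity algorithm of Calude et al.~\cite{calude} runs in time $O(n_p^5)$, which is again doubly-exponential in $\cl$. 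Solving all five games $\game^b$ and assembling $\sigma$ adds only a constant factor. The memory size of the resulting strategy for $\game$ is $\abs{Q^{0000}\times\cdots\times Q^{1111}} = 2^{O(n\log n)}$, doubly-exponential in $|V|$ and $\cl$ as claimed. Finally, an analogous construction — swapping the roles of the players and using the dual notion for Player~$1$ mentioned after Definition~\ref{def:adaptive} — yields an adaptive strategy for Player~$1$ within the same bounds.

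The main obstacle is not any single hard lemma but the care needed to argue that the \emph{memoryless} strategy on $\game'$ is genuinely adaptive for \emph{arbitrary} prefixes, including prefixes not consistent with $\sigma$. The subtlety is that an adaptive strategy must do optimally from a prefix it did not construct; this works here precisely because the parity-automaton components $q^b$ of the product vertex are a deterministic function of the word read so far, so the ``last product vertex'' already encodes everything relevant about the prefix — this is the content of Remark~\ref{remark: two prefix same ending} and must be invoked explicitly. A secondary point to get right is that the generalized Büchi automata of Theorem~\ref{thm:rLTL-Buchi} must first be converted to (ordinary, then deterministic) parity automata, and that the claimed bounds survive this conversion; the text handles this in Step~2 by citing \cite{Buchi-Parity}, and I would simply cite the same.
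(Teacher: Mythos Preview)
Your proposal is correct and follows essentially the same approach as the paper: the theorem is stated at the end of Section~\ref{solving rLTL games} as a summary of the five-step construction, and your write-up is a faithful recapitulation of that construction together with the size and time bookkeeping already given in Steps~1--5. The correctness argument you sketch (using Remark~\ref{remark_enforcementcharac} to identify the maximal enforceable value with membership in $\enfExact{b}$, and Remark~\ref{remark: two prefix same ending} to handle arbitrary prefixes) is exactly what the paper does in Step~5, and your emphasis on why memorylessness in $\game'$ suffices for adaptiveness from \emph{all} prefixes is the right point to stress.
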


Note that adaptive strategies enforce the best possible value from the given prefix. 
This value can be obtained at runtime as follows: Given a play prefix~$\p$ ending in some vertex~$v$, let $(q^{0000},\ldots, q^{1111})$ be the state of the automaton implementing the adaptive finite-state strategy computed above is in after the prefix~$\p$. 
Note that this state has to be tracked to determine the next move the strategy prescribes at prefix~$\p$ (in case $v \in V_0$).
Then, there is a unique~$b$ such that $(v, q^{0000},\ldots, q^{1111}) \in \enfExact{b}$.
Then, the value currently enforced by the adaptive strategy is $b$, which, by construction, is the maximal one that can be enforced from $\p$.


\section{Strongly Adaptive Strategies}
\label{sec:strongly-adaptive}

In the previous section, we have argued the importance of adaptive strategies and proved that in every rLTL game both players have an adaptive strategy.
Intuitively, such a strategy exploits bad moves of the opponent to always enforce the best truth value possible after a given prefix. 
However, such a strategy does not necessarily seek out opportunities for the opponent to make bad moves. 
We argue that this property implies that some adaptive strategies are more desirable than others, which leads us to the notion of strongly adaptive strategies.

In this section, we define strongly adaptive strategies, which are based on a fine-grained analysis of the possibilities a strategy gives the opponent to make bad moves and the resulting outcomes of such bad moves.
We show that strongly adaptive strategies do not exist in every rLTL game.
This is in stark contrast to adaptive strategies, which always exists. 
Nevertheless, we give a doubly-exponential time algorithm that decides whether a strongly adaptive strategy exists and, if yes, computes one.

\subsection{Bad Moves}\label{sec:bad moves}
We already have used the notion of bad moves in Section~\ref{adaptive strategy} in an intuitive, but informal, way. 
Formally, we say a play~$\rho = v_0v_1 \cdots$ contains a bad move of Player~$i$ at position~$j>0$ if the player can enforce some value $b$ from the prefix~$v_0 \cdots v_{j-1}$ but can no longer enforce the value~$b$ from the prefix~$v_0\cdots v_{j}$.
Note that the position~$j$ is the target of the bad move. 
Moreover, note that moving from $v_0 \cdots v_{j-1}$ to $v_{j}$ can only be a bad move for Player~$i$ if it is Player~$i$'s turn at $v_{j-1}$.
Also, there must be some other edge from $v_{j-1}$ to a vertex~$v\not = v_{j}$ so that \new{they} can still enforce $b$ from $v_0 \cdots v_{j-1} v$. 

For the example given in \autoref{fig:motivate adaptive}, we know that Player~$1$ can enforce the value $0011$ from $01$ (by moving the token from $01$ to $2$). 
Suppose she moves the token from $01$ to $4$ instead. 
Then, Player~$0$ can enforce $0111$ by visiting vertex~$5$. 
Hence Player~$1$ can no longer enforce $0011$ from $014$. 
Therefore, the move from prefix~$01$ to vertex~$4$ made by Player~$1$ is bad. 

Note that if Player~$1$ makes a bad move from a play prefix~$\p$ to vertex~$v$, then the maximum value Player~$0$ can enforce from $\p v$ is strictly larger than the maximum value he can enforce from $\p$. 
Hence, if the maximum value Player~$0$ can enforce from a play prefix is $1111$, then Player~$1$ can not make any bad move from that prefix.
Moreover, assuming Player~$0$ does not make any bad move, the maximum value Player~$0$ can enforce from any play prefix can increase at most four times during a play. Thus, Player~$1$ can make at most four bad moves against an adaptive strategy, because such a strategy does not make any bad moves.

\begin{remark}
\label{remark:badmoves}
Let $\sigma$ be an adaptive strategy and $\p$ a play prefix (not necessarily consistent with $\sigma$). Then, every $(\sigma,\p)$-play~$\rho = v_0v_1 \cdots$ contains at most four bad moves of Player~$1$ after $\p$.
Also, if there is no bad move by Player~$1$ at positions~$j_0, j_0+1, \ldots, j_1$ in $\rho$, then $\sigma$ enforces the same truth values from every prefix of the form~$v_0 \cdots v_{j}$ with $j_0-1 \le j < j_1$.
\end{remark}

Our next example shows that an adaptive strategy does not actively seek out opportunities for the opponent to make bad moves, it just exploits those made.

\subsection{Motivating Example}\label{sec:motivate strongly}
Recall the example given in \autoref{fig:motivate adaptive}.
The strategy for Player~$0$ given by $\{\mbox{$0\rightarrow 1$}; 3\rightarrow 2; 4\rightarrow 5\}$ is adaptive: if Player~$1$ makes a bad move by moving from $1$ to $4$, then moving from $4$ to $5$ improves the value of the play to $0111$.
Such an improvement can only be enforced after the bad move. 

Another adaptive strategy for Player~$0$ is to move along $0\rightarrow 2$ directly in his first move and then move along $3\rightarrow 2$ every time.
Then, the token can never reach vertex~$1$. 
Hence, Player~$1$ can never make a bad move.
However, it also means that there can not be a play with value $0111$.
By contrast, if Player~$0$ moves along $0\rightarrow 1$, there is a chance of getting such plays (when Player~$1$ makes a bad move of $1\rightarrow 4$).
Therefore, using the earlier strategy of moving the token along $0\rightarrow 1$, Player~$0$ might be able to enforce $0111$ at some point, but he can never achieve the value~$0111$ when moving directly to vertex~$2$.

Similarly, in many games, a player may have two (or more) optimal choices to move the token from some prefix. In such situations, that player should compare the bad moves \new{their} opponent can make in both choices and determine the choice in which \new{they} can enforce the best value after a bad move has been made by the opponent. To capture this, we refine the notion of adaptive strategies by introducing strongly adaptive strategies, which are, in a sense to be formalized below, the best adaptive strategies.

\subsection{Definitions}

In this section, we introduce the necessary machinery to define strongly adaptive strategies for Player~$0$. 
Throughout this section, we are concerned with ranking adaptive strategies according to the number of bad moves they allow Player~$1$ to make, and on the effect these moves have. 
For the sake of conciseness, unless stated otherwise, from now on a bad move always refers to a bad move by Player~$1$.

We begin by introducing a ranking of plays and then lift this to strategies.
As the number of bad moves in one play is bounded by four, this results in at most five truth values that can be enforced from prefixes of the play, i.e., the one that is enforced before the first bad move, and the ones after each bad move.
If Player~$1$ makes less than four bad moves during a play, we use the symbol~$\blank \notin \valueset$ to signify~this.

We collect this information in a summary, a five-tuple~$(b_0, \ldots, b_k,\blanks) \in (\valueset \cup \set{\blank})^5$ such that $\blank \neq b_0 < b_1 < \cdots < b_k$.
The set of all summaries is denoted by $\summaries$. 

Fix an adaptive strategy~$\sigma$ for Player~$0$, a play prefix~$\p$ not necessarily consistent with $\sigma$, and a $(\sigma,\p)$-play~$\rho$, and let $0 \le k \le 4$ be the number of bad moves by Player~$1$ after $\p$.
Define $\p_0 = \p$ and let $\p_{j}$, for $1 \le j \le k$, be the prefix of $\rho$ ending at the position of the $j$-th bad move.
Due to Remark~\ref{remark:badmoves}, these prefixes contain information about all possible truth vales that are enforced by Player~$0$ from prefixes of $\rho$. 
We employ summaries to capture the values a given \emph{strategy}~$\sigma$ enforces from these prefixes. 
Formally, for $0 \le j \le k$, let $b_j$ be the maximal value that $\sigma$ enforces from $\p_j$.
As $\sigma$ is adaptive, these values are strictly increasing.
So, we can define the summary
$\playsummary{\sigma}{\p}{\rho} = (b_0, \ldots, b_k, \blanks)$.
Intuitively, the summary collects all information about which truth values the strategy~$\sigma$  enforces after each bad move has been made.
If there are less than four bad moves in $\rho$ after $\p$, then we fill the summary with $\blank$'s to obtain a vector of length five.

We will use such summaries to compare strategies.
To do so, we compare summaries in lexicographic order~$\lelex$ with $\blank$ being the smallest element. 
In other words, we prefer larger truth values of smaller ones and prefer the opportunity for a bad move over the impossibility of a bad move.

\begin{example}
\label{example:summaries}
Consider again the game in Figure~\ref{fig:motivate adaptive}. 
Let $\sigma_1$ be the memoryless Player~$0$ strategy always making the moves~$\set{0 \rightarrow 1; 3\rightarrow 2; 4\rightarrow 5 }$.
Then,
 \[\playsummary{\sigma_1}{0}{0145^\omega} = \playsummary{\sigma_1}{01}{0145^\omega} = (0011,0111,\blank,\blank,\blank)\]
and $\playsummary{\sigma_1}{014}{0145^\omega} = (0111,\blank,\blank,\blank,\blank)$ because the play~$0145^\omega$ does not contain a bad move of Player~$1$ after $014$.
In addition, $\playsummary{\sigma_1}{\p}{01(23)^\omega} = (0011,\blank,\blank,\blank,\blank)$ for every prefix~$\p$ of $01(23)^\omega$, as the play does not contain any bad move of Player~$1$.

Let $\sigma_2$ now be the memoryless Player~$0$ strategy given by $\set{0 \rightarrow 2; {3\rightarrow 2}; {4\rightarrow 5} }$.
Then, we have $\playsummary{\sigma_2}{\p}{0(23)^\omega} = (0011,\blank,\blank,\blank,\blank)$ for every prefix~$\p$ of  $0(23)^\omega$ because the play does not contain bad moves of Player~$1$.
\end{example}

We continue by listing some simple properties of summaries that are useful later on.
Consider the prefixes $0$, $01$, $014$ of $0145^\omega$ in Example~\ref{example:summaries}. The former two have the same summary~$\smry$, while the summary of the latter is obtained by shifting $\smry$ to the left. Note that moving from $01$ to $4$ is a bad move of Player~$1$, while moving from $0$ to $1$ is not.
By inspecting the definition of play summaries, it is clear that extending plays by bad moves corresponds to a left shift, while Remark~\ref{remark:badmoves} implies that the absence of bad moves keeps summaries stable.

To formalize this, we use the following notation:
for $\smry = (b_0, \ldots, b_k, \blanks) \in \summaries$ with $k>0$ let $\leftshift{\smry} = (b_1, \ldots, b_k, \blanks)  \in \summaries$, i.e., we shift $\smry$ to the left and fill the last entry with a $\blank$.
As entries in summaries are strictly increasing, we have $\leftshift{\smry} \glex \smry$ for every $\smry$ with at least two non-$\blank$ entries.

\begin{remark}
\label{remark:playsummaryevolution}
Let $\sigma$ be an adaptive strategy for Player~$0$, let $\p$ be a play prefix, and let $\rho = v_0 v_1 \cdots$ be a $(\sigma, \p)$-play. Further, let $n = |\p|$, i.e., $v_{n-1}$ is the last vertex of $\p$, and note that $\rho$ is also a $(\sigma, \p v_n)$-play.

If $\rho$ has a bad move at position~$n$, then $\playsummary{\sigma}{ \p v_n}{ \rho} = \leftshift{\playsummary{\sigma}{ \p}{ \rho}}$ (reflecting the fact that $\rho$ has one bad move less after $\p v_n$ than after $\p$), otherwise we have $\playsummary{\sigma}{ \p v_n}{ \rho} = \playsummary{\sigma}{ \p}{ \rho}$.
Note that we have kept $\sigma$ and $\rho$ fixed and just added a vertex to the prefix we consider.
\end{remark} 

As seen above, a bad move shifts the summary to the left.
The following remark shows a dual result, allowing us to determine the summary of a play prefix of length one from the summary of play prefix up to the first bad move. 
In Example~\ref{example:summaries}, note that the strategy~$\sigma_1$ (using the edges $\set{0 \rightarrow 1; 3\rightarrow 2; 4\rightarrow 5 }$) enforces value~$0011$ from $0$,  i.e., the first entry of $\playsummary{\sigma_1}{0}{0145^\omega}$ is $0011$.
The play~$0145^\omega$ has its first bad move of Player~$1$ at position~$2$, and the corresponding summary is~$\playsummary{\sigma_1}{014}{0145^\omega} = (0111,\blank,\blank,\blank,\blank)$.
Hence, $\playsummary{\sigma_1}{0}{0145^\omega}$ must be the ``concatenation''~$(0011,0111,\blank,\blank,\blank)$ of $0011$ and $(0111,\blank,\blank,\blank,\blank)$ (with the last $\blank$ removed).
In general, we have the following property.

\begin{remark}\label{remark:leftshiftSummary}
Let $\smry = (b_0,\ldots, b_k,\blanks) \in \summaries$ with $k>0$ and let $v$ be a vertex.
Let $\sigma$ be an adaptive strategy such that $b_0$ is the maximal value that $\sigma$ enforces from $v$
and let $\rho$ be a $(\sigma, v)$-play with at least one bad move, and let $\p$ be the prefix of $\rho$ ending at the position of the first bad move.
Then, $\playsummary{\sigma}{\p}{\rho} = \leftshift{\smry}$ if and only if $\playsummary{\sigma}{v}{\rho} = \smry$.
\end{remark}

Again, recall Example~\ref{example:summaries}, and consider the plays~$\rho_b = 0145^\omega$ (with a bad move by Player~$1$) and $\rho_n = 01(23)^\omega$ (without a bad move), which are both $(\sigma_1,0)$-plays.
We have $\playsummary{\sigma_1}{0}{\rho_b} = (0011,0111,\blank,\blank,\blank)$ and $\playsummary{\sigma_1}{0}{\rho_n} = (0011,\blank,\blank,\blank,\blank)$.
Disregarding the $\blank$'s the summary of $\rho_n$ can be seen as a strict prefix of the summary of $\rho_b$.
Note that $(0011,\blank,\blank,\blank,\blank) \llex (0011,0111,\blank,\blank,\blank)$. 

In general, fix a strategy~$\sigma$, a play prefix~$\p$, and a $(\sigma,\p)$-play~$\rho$ with $\playsummary{\sigma}{\p}{\rho} = (b_0, \ldots,b_k,\blanks)$. Then, for every $k' < k$ there is a $(\sigma, \p)$-play~$\rho'$ with $\playsummary{\sigma}{\p}{\rho'} = (b_0, \ldots,b_{k'},\blanks)$, i.e., any play where Player~$1$ stops making bad moves after the first $k'$ ones (recall that making bad moves is a choice).

To formalize this, we say that a summary~$(b_0, \ldots, b_k, \blanks)$ is a strict prefix of a summary~$(b_0', \ldots, b_{k'}', \blanks)$ if $k < k'$ and $b_j = b_j'$ for all $0\le j \le k$, i.e., we only consider non-$\blank$ entries.
Now, fix $(\sigma,\p)$-plays $\rho, \rho'$. 
We say that $\rho$ is $(\sigma, \p)$-covered by $\rho'$ if $\playsummary{\sigma}{\p}{\rho}$ is a strict prefix of $\playsummary{\sigma}{\p}{\rho'}$.
Also, we say that $\rho$ is a $(\sigma, \p)$-uncovered play if there is no $(\sigma,\p)$-play~$\rho'$ that covers it. 
When $\sigma$ and $\p$ are clear from context, we drop them and say that a play is uncovered.
In the example, $\rho_n$ is $(\sigma_1, 0)$-covered  by $\rho_b$, which is $(\sigma_1, 0)$-uncovered.

Now, we lift summaries from plays to strategies by defining~$\stratsummary{\sigma}{\p}$ as the lexicographical minimum over all $\playsummary{\sigma}{\p}{\rho}$ where $\rho$ ranges over $(\sigma, \p)$-uncovered plays. 
Note that if $\rho$ $(\sigma, \p)$-covers $\rho'$, then the summary of $\rho$ is a strict prefix of the summary of $\rho'$ and, therefore, strictly smaller.
Our definition of $\stratsummary{\sigma}{\p}$ discards such plays when computing the minimum, but the information is not lost as it appears as a prefix of a covering play.

In the running example, we have $\stratsummary{\sigma_1}{0} = (0011,0111,\blank,\blank,\blank)$ and $\stratsummary{\sigma_2}{0} = (0011,\blank,\blank,\blank,\blank)$.

\begin{remark}\label{remark:existenceofplaysummary}
Let $\sigma$ be an adaptive strategy for Player~$0$ and let $\p$ be a play prefix. If $\stratsummary{\sigma}{\p} = \smry$ for some $\smry\in \summaries$, then there exists a $(\sigma,\p)$-uncovered play $\rho$ such that $\playsummary{\sigma}{\p}{\rho} = \smry$.
\end{remark}

Finally, we are ready to formalize our intuitive notion of strongly adaptive strategies, i.e., adaptive strategies that seek out opportunities for the opponent to make bad moves. 
Recall that summaries record the possibility, and the effect, of Player~$1$ making bad moves. 
So, we intuitively say a strategy is strongly adaptive if it maximizes the summaries globally.

Recall that a strategy is adaptive if the value it enforces from any possible play prefix is as large as the value any other strategy enforces from that prefix.
Analogously, a strategy is strongly adaptive if its summary for every play prefix is as good as the summary from the play prefix for any other strategy. 

\begin{definition}
An adaptive strategy~$\sigma_0$ is strongly adaptive if $\stratsummary{\sigma_0}{\p}\gelex \stratsummary{\sigma}{\p}$ for every adaptive strategy~$\sigma$ and every play prefix~$\p$.	
\end{definition}

Before we start our proof, let us introduce one more useful bit of notation. 
For every play prefix~$\p$, let $\summaryMax(\p)$ denote the lexicographical maximum of $\stratsummary{\sigma}{\p}$ over all adaptive strategies $\sigma$ for Player~$0$ in the game $\game$, i.e.,
\[\summaryMax(\p) = \max_{\sigma} \stratsummary{\sigma}{\p},\]
where $\sigma$ ranges over all adaptive strategies for Player~$0$.

Note that every strongly adaptive strategy is adaptive by definition, and the first entry of $\summaryMax(\p)$ is equal to the maximal value that can be enforced from $\p$. However, as argued above, not every adaptive strategy is strongly adaptive, so in particular a strongly adaptive strategy for Player~$0$ never makes a bad move (of Player~$0$). 

\subsection{Existence of Strongly Adaptive Strategies}

While strongly adaptive strategies generalize adaptive strategies, there is a catch in the definition: The former may not always exist, whereas the latter always do. 
To show this, consider the graph given in \autoref{fig:no strongly ex} with initial vertex~$0$ and the formula~$\varphi = \Boxdot p$. It is clear that Player~$0$ can enforce $0011$ from every play prefix in $0(10)^*$ by eventually moving to vertex~$3$. And if at some point, Player~$1$ makes the bad move~$1\rightarrow 2$, then Player~$0$ enforces $0111$ as the token stays at vertex~$2$ forever. 
However, every adaptive strategy for Player~$0$ has to eventually visit vertex~$3$, unless Player~$1$ makes a bad move prior. 

Note that Player~$1$ can only make a bad move at vertex~$1$, so visiting $1$ once more when at vertex $0$ instead of moving to vertex~$3$ gives her another chance to make a bad move. 
So, to optimize the enforced value under one bad move, Player~$1$ should stay in the loop between $0$ and $1$ forever.
However, this is not the optimal behavior if no bad move occurs, as looping yields a value of $0000$, which is smaller than the value $0011$ that is achieved by eventually moving to $3$.

Formally, for $n\ge 0$, let $\sigma_n$ be the strategy such that $\sigma_n(0(10)^{n'}) = 1$ for all $n' < n$ and $\sigma_n(0(10)^{n'}) = 3$ for all $n' \ge n$, i.e., $\sigma_n$ gives Player~$1$ $n$ chances to make a bad move and then moves to $3$, thereby preventing her from making a bad move.
Note that each of the $\sigma_n$ is adaptive, but $\sigma_{n+1}$ gives Player~$1$ more opportunities to make a bad move than $\sigma_{n}$, namely for the prefix~$0(10)^{n}$.

Fix some $n$. 
There are only two $(\sigma_{n+1},0(10)^{n})$-plays, i.e., $0(10)^{n}10(34)^\omega$ (Player~$1$ does not make a bad move) and $0(10)^{n}12^\omega$ (Player~$1$ makes a bad move).
Then, $\playsummary{\sigma_{n+1}}{0(10)^{n}}{0(10)^{n}10(34)^\omega} = (0011,\blank,\blank,\blank,\blank)$ as well as $\playsummary{\sigma_{n+1}}{0(10)^{n}}{0(10)^{n}12^\omega} = (0011,0111,\blank,\blank,\blank)$.
Hence, we conclude $\stratsummary{\sigma_{n+1}}{0(10)^{n}} = (0011,0111,\blank,\blank,\blank)$, as the former is covered by the latter. 

Towards a contradiction assume there is a strongly adaptive strategy $\sigma$.
By definition, we have 
\begin{equation}
\label{eq}
	\stratsummary{\sigma}{0(10)^{n}} \gelex \stratsummary{\sigma_{n+1}}{0(10)^{n}}  = (0011,0111,\blank,\blank,\blank)
\end{equation} for every $n$.
As we have $\stratsummary{\sigma}{0(10)^{n}} \llex (1111,\blank,\blank,\blank,\blank)$ ($p$ does not hold at vertex~$0$), $\sigma$ must give Player~$1$ the chance to make at least one bad move after the prefix~$0(10)^{n}$.
So, we must have $\sigma(0(10)^{n}) = 1$, as Player~$1$ can only make a bad move at vertex $1$.

Thus, the play~$(01)^\omega$ (with value~$0000$) is a $(\sigma, 0(10)^{n})$-play for every $n$, i.e., $\sigma$ only enforces $0000$ from every such prefix.
Hence, the first entry of $\stratsummary{\sigma}{0(10)^{n}}$ is $0000$ for every $n$.
This contradicts Inequality~(\ref{eq}).
Therefore, $\sigma$ is not strongly adaptive, i.e., Player~$0$ does not have a strongly adaptive strategy in the game.

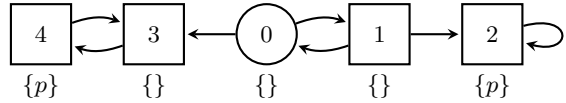
\begin{figure}[t!]
	\centering
	\begin{tikzpicture}
		\node[player0, label={below:$\{ \}$}] (0) at (0, 0) {$0$};
		\node[player1, label={below:$\{ \}$}] (1) at (1.5, 0) {$1$};
		\node[player1, label={below:$\{ p\}$}] (2) at (3, 0) {$2$};
		\node[player1, label={below:$\{\}$}] (3) at (-1.5,0) {$3$};
		\node[player1, label={below:$\{ p\}$}] (4) at (-3,0) {$4$};

		\path[->] (0) edge (3) edge[bend left=20] (1);
		\path[->] (1) edge (2) edge[bend left=20] (0);
		\path[->] (2) edge[loop right] ();			    
		\path[->] (3) edge[bend left=20] (4);
		\path[->] (4) edge[bend left=20] (3);
	\end{tikzpicture}
	\caption{An $\rLTL$ game with no strongly adaptive strategy}
	\label{fig:no strongly ex}
\end{figure}

As strongly adaptive strategies do not necessarily exist, we are interested in the following problem.
\begin{problem}\label{problem 2}
	Given an $\rLTL$ game, determine whether a strongly adaptive strategy for Player~$0$ exists and, if yes, compute one.
\end{problem}

\subsection{Computing Strongly Adaptive Strategies}
We solve Problem \ref{problem 2} for an $\rLTL$ game $\game = (\arena,\varphi)$ by constructing the parity games $\game^b$ for each $b$ and the extended game $\game' = (\arena',\varphi)$ as in the algorithm given in Section~\ref{solving rLTL games}.
Recall that Player~$0$ wins $\game^b$ if and only if he can enforce $b$ in $\game$ and that $\game'$ is the product of the $\game^b$.
As we have described in Step~$5$ of that algorithm, it is easy to see that solving Problem \ref{problem 2} for the game $\game$ is equivalent to solving the problem for game $\game'$. Hence, from now on, we only consider $\game'$ and show properties for the game $\game'$, which we can use later to compute a strongly adaptive strategy in $\game$. This strategy can then be transformed into a strongly adaptive strategy for $\game$.
%
%
%
In the following, it is often useful to focus on one truth value by equipping $\game'$ with the parity condition of $\mathcal C_b$ for some $b$: a vertex~$(v,q_{1111}, \ldots, q_{0000})$ has the color that $q_b$ has in $\mathcal C_b$. 
Thus, $\game'$ equipped with the parity condition of $\game^b$ is equivalent to $\game^b$.

To decide whether a strongly adaptive strategy exists, we proceed as follows: 
\begin{enumerate}
	\item We first give a characterization of the vertices~$v$ of $\game'$ with $\summaryMax(v)=\smry$ that only uses summaries that are larger than $\smry$. This allows us to compute $\summaryMax(v)$ for every vertex $v$ by induction over the summaries.
	\item Using the decomposition of $\game'$ into regions with the same summary and the characterization we construct a series of obliging games~\cite{Obliging}. In an obliging game, Player~$0$ has a strong winning condition that has to be satisfied on every play and a weak winning condition that must be satisfiable if Player~$1$ cooperates. In our case, the strong winning condition requires Player~$0$ to always enforce the best value that is currently possible and the weak condition requires Player~$1$ to have a chance to make a bad move (if the summary encodes that this is still possible), i.e., whenever possible, Player~$1$ is given the chance to make a bad move.
	\item Finally, if  Player~$1$ has in all obliging games a strategy satisfying both the strong and the weak condition, then these can be turned effectively into a strongly adaptive strategy, otherwise there is no such strategy.
\end{enumerate}

We first provide a useful lemma showing that a strategy in $\game'$ is strongly adaptive if and only if its summary is history independent, i.e., only depends on the last vertex.

\begin{lemma}\label{lem:strongly ad if and only if e(p) is summaryMax(v)}
A strategy $\sigma$ for Player~$0$ in $\game'$ is strongly adaptive if and only if for every play prefix~$\p$ ending in vertex~$v$, it holds that $\stratsummary{\sigma}{\p}) = \summaryMax(v)$.
\end{lemma}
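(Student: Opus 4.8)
\textbf{Proof plan for Lemma~\ref{lem:strongly ad if and only if e(p) is summaryMax(v)}.}

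The plan is to prove the two directions separately, with the hard work concentrated in the ``only if'' direction. For the ``if'' direction, suppose $\stratsummary{\sigma}{\p} = \summaryMax(v)$ for every play prefix~$\p$ ending in vertex~$v$. Since $\summaryMax(v) \gelex \stratsummary{\sigma'}{\p}$ for every adaptive strategy~$\sigma'$ and every prefix~$\p$ ending in~$v$ (by definition of $\summaryMax$ as a lexicographic maximum over adaptive strategies), we immediately get $\stratsummary{\sigma}{\p} = \summaryMax(v) \gelex \stratsummary{\sigma'}{\p}$, which is exactly the defining condition for $\sigma$ to be strongly adaptive---provided we first check $\sigma$ is adaptive. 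But the first entry of $\stratsummary{\sigma}{\p}$ equals the first entry of $\summaryMax(v)$, which (as noted after the definition of $\summaryMax$) is the maximal value Player~$0$ can enforce from any prefix ending in~$v$; by Remark~\ref{remark: two prefix same ending} this is well-defined, so $\sigma$ enforces the optimal value from every prefix and is adaptive.

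For the ``only if'' direction, let $\sigma$ be strongly adaptive. By definition, $\sigma$ is adaptive and $\stratsummary{\sigma}{\p} \gelex \stratsummary{\sigma'}{\p}$ for every adaptive~$\sigma'$ and every~$\p$; taking the maximum over~$\sigma'$ gives $\stratsummary{\sigma}{\p} \gelex \summaryMax(\p)$, where I write $\summaryMax(\p)$ for the maximum in the definition (which, by the intended statement, I want to show depends only on the last vertex~$v$ of~$\p$). So the content is the reverse inequality $\stratsummary{\sigma}{\p} \lelex \summaryMax(v)$ together with the claim that $\summaryMax(\p)$ only depends on~$v$. The second claim I would extract from Remark~\ref{remark: two prefix same ending}: any memoryless adaptive strategy~$\sigma'$ enforces the same value from two prefixes ending in the same vertex, and in fact one can show more---using that $\game'$ is a product of parity games so winning is suffix-invariant---that $\playsummary{\sigma'}{\p_1}{\rho}$ and $\playsummary{\sigma'}{\p_2}{\rho'}$ coincide whenever $\p_1,\p_2$ end in the same vertex and $\rho,\rho'$ share the corresponding suffix. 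Hence $\stratsummary{\sigma'}{\p_1} = \stratsummary{\sigma'}{\p_2}$ for memoryless~$\sigma'$, and since (by the adaptive-strategy construction in Section~\ref{solving rLTL games}) the maximum defining $\summaryMax$ is attained by memoryless adaptive strategies---or at least is not increased by restricting to them, which needs an argument---$\summaryMax(\p_1) = \summaryMax(\p_2) =: \summaryMax(v)$.

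The crux is then showing $\stratsummary{\sigma}{\p} = \summaryMax(v)$ at \emph{every} prefix, not just at those consistent with~$\sigma$. Here I would argue by contradiction: suppose $\stratsummary{\sigma}{\p} \llex \summaryMax(v)$ for some prefix~$\p$ ending in~$v$. Pick a memoryless adaptive strategy~$\sigma^\ast$ with $\stratsummary{\sigma^\ast}{v} = \summaryMax(v)$; by the suffix-invariance above $\stratsummary{\sigma^\ast}{\p} = \summaryMax(v) \glex \stratsummary{\sigma}{\p}$, directly contradicting that $\sigma$ is strongly adaptive. The only subtlety is ensuring that the witness~$\sigma^\ast$ realising $\summaryMax(v)$ can be taken memoryless and adaptive so that Remark-style suffix reasoning applies, and that evaluating $\stratsummary{\sigma}{\p}$ at a prefix $\sigma$ ``had no control over'' is legitimate---but this is exactly the generality built into the $(\sigma,\p)$-play definition in Section~\ref{section3} and into Definition~\ref{def:adaptive}.

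\textbf{Main obstacle.} The delicate point is the interplay between history-dependence and the per-prefix quantification: a strongly adaptive strategy must match $\summaryMax(v)$ even on prefixes reached by Player~$1$ deviating, so I must be careful that $\summaryMax(\cdot)$ genuinely factors through the last vertex and that the optimal summary is witnessed by a memoryless (hence suffix-stable) strategy. Getting the vertex-dependence of $\summaryMax$ cleanly---rather than merely prefix-dependence---is where Remark~\ref{remark: two prefix same ending} and the product-of-parity-games structure of $\game'$ must be invoked precisely; everything else is bookkeeping with the lexicographic order and the definitions of $\stratsummary{\cdot}{\cdot}$ and covered/uncovered plays.
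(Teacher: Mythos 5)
Your overall decomposition matches the paper's: reduce the statement to showing that $\summaryMax(\p)$ depends only on the last vertex of $\p$, with the ``if'' direction and the adaptivity check being routine. However, the way you establish that vertex-dependence contains a genuine gap, and it is exactly the step you yourself flag as ``needs an argument'': you want the maximum defining $\summaryMax$ to be attained by (or at least not decreased when restricting to) \emph{memoryless} adaptive strategies, so that Remark~\ref{remark: two prefix same ending}-style suffix reasoning applies, and you later ``pick a memoryless adaptive strategy $\sigma^\ast$ with $\stratsummary{\sigma^\ast}{v} = \summaryMax(v)$''. This is false in general. The paper's own non-existence example (Figure~\ref{fig:no strongly ex}) already shows why: the strategies realizing the summary $(0011,0111,\blank,\blank,\blank)$ from a prefix must loop through vertex~$1$ at least once more and then \emph{eventually} move to~$3$, which requires counting; the only memoryless choices (always to $1$, or always to $3$) yield either a non-adaptive strategy or the strictly smaller summary $(0011,\blank,\blank,\blank,\blank)$. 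Passing to $\game'$ does not repair this, since the automaton components stabilize and do not supply a counter. So the witness $\sigma^\ast$ you invoke need not exist, and the restriction to memoryless strategies can strictly decrease the maximum.

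The fix --- and the paper's actual argument --- is to avoid memorylessness entirely and instead \emph{transplant} an arbitrary (history-dependent) witness from one prefix to the other: given $\p_1,\p_2$ ending in the same vertex with $\summaryMax(\p_1) \llex \summaryMax(\p_2)$ and $\sigma_2$ attaining $\summaryMax(\p_2)$, define $\sigma'(\p_1\p') = \sigma_2(\p_2\p')$ (and follow a summary-maximizing strategy for $\p_1$ elsewhere). Lemma~\ref{lem:follow adaptive is adaptive} shows $\sigma'$ is adaptive, and suffix-invariance of both play values and bad-move positions in $\game'$ (values are parity conditions, and enforceability depends only on the current product vertex) gives $\stratsummary{\sigma'}{\p_1} = \stratsummary{\sigma_2}{\p_2} \glex \summaryMax(\p_1)$, contradicting the definition of $\summaryMax(\p_1)$ as a maximum over all adaptive strategies. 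Your suffix-invariance observation is the right ingredient; it just has to be applied to the relocated copy of a general strategy rather than to a memoryless one.
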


As computed in Section~\ref{solving rLTL games}, let $\enfExact{b}$ be the set of vertices in $\game'$ from which the maximum value Player~$0$ can enforce is~$b$. 
Furthermore, for a summary $\smry \in \summaries$, let $\smrEMore{\smry}$ denote the set of vertices $v$ in $\game'$ for which $\summaryMax(v) \gelex \smry$. 
Let $\smrExact{\smry}$, $\smrMore{\smry}$, and $\smrLess{\smry}$ be defined similarly. 

\begin{remark}
Let $\smry = (b_0, \ldots, b_k,\blanks)$. Then, $\smrExact{\smry} \subseteq \enfExact{b_0}$.
\end{remark}

For a vertex set $F$, let $\pre(F)$ denote the set of vertices from which there is an edge to $F$.
Maybe surprisingly, we do not distinguish between vertices of Player~$0$ and Player~$1$, but we will only apply $\pre(F)$ when it is Player~$1$'s turn.

Next, we characterize the sets $\smrExact{\smry}$ in terms of the existence of strategies that witness summaries. 
The key aspects of this characterization is that it only refers to summaries~$\smrydash \glex \smry$, which will later allow us to compute these sets inductively. 

Given a strategy~$\sigma$ for Player~$0$ in $\game'$ and a play prefix~$\p$, let $\bmfree{\sigma}{\p}$ denote the set of $(\sigma,\p)$-plays that do not contain a bad move by Player~$1$ after $\p$.

\begin{definition}
Let $\sigma$ be a strategy for Player~$0$, $\p$ be a play prefix, and $\smry = (b_0, \ldots, b_k, \blanks)$ a summary. 
We say that $\sigma$ is an $\smry$-witness from $\p$ if and only if it satisfies the following three properties:

		\begin{description}
			\item[Enforcing] Every play in $\bmfree{\sigma}{\p}$ satisfies the parity condition of the game $\game^{b_0}$. Thus, a witness has to enforce~$b_0$ unless Player~$1$ makes a bad move.
			
			\item[Enabling] If $k \ge 1$, there exists a play in $\bmfree{\sigma}{\p}$ that visits $\pre\big( \smrExact{\leftshift{\smry}}\big)$. Thus, if there is the chance to reach a vertex where Player~$1$ can make a bad move, then a witness has to visit such a vertex. 
				Note that we require that the bad move leads to a vertex with summary~$\leftshift{\smry}$, which is the largest summary that can be guaranteed to be reached from $\p$ after a bad move. 
			
			\item[Evading] If $k \ge 1$, then let us define $\evade{\smry}$ to be the set of summaries $\smrydash = (b_0',\ldots, b_{k'}',\blanks)$ with $b_0' > b_0$, $\smrydash \llex \leftshift{\smry}$, and such that $\smrydash$ is not a strict prefix of $\leftshift{\smry}$. Then, no play in $\bmfree{\sigma}{\p}$ visits $\pre\big(\smrExact{\smrydash}\big)$ for any $\smrydash\in \evade{\smry}$.
			Thus, a witness can never reach a vertex where Player~$1$ can make a bad move to reach a summary that is worse than $\leftshift{\smry}$. 
		\end{description}
\end{definition}
Recall that $\leftshift{\smry} \glex \smry$ and that $\smrydash \in \evade{\smry}$ implies $\smrydash \glex \smry$.

Our next lemma shows that witnesses do indeed witness the summaries of vertices.

\begin{lemma}\label{lem:characterization}
	In the game $\game'$, for some summary $\smry$ and for some vertex~$v$, we have $v\in \smrExact{\smry}$ if and only if $v \not \in \smrMore{\smry}$ and there is an $\smry$-witness from $v$.
\end{lemma}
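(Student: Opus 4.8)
The plan is to prove both directions of the characterization, exploiting the fact that the notion of $\smry$-witness is defined purely in terms of the sets $\smrExact{\smrydash}$ with $\smrydash \glex \smry$ (by the remark preceding the lemma), so the argument is essentially a fixed-point/induction-style reasoning over summaries even though the statement itself is a plain equivalence.

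For the direction ``$v \in \smrExact{\smry}$ implies there is an $\smry$-witness from $v$'': start from a strongly adaptive strategy $\sigma_0$ (note: we may have to first justify its existence in the sub-case at hand, or argue with an adaptive strategy achieving $\summaryMax(v)$ at $v$ via Remark~\ref{remark:existenceofplaysummary}); restrict attention to $(\sigma_0, v)$-plays and verify the three properties. \textbf{Enforcing} follows because $\stratsummary{\sigma_0}{v} = \summaryMax(v) = \smry$ has first entry $b_0$, so every bad-move-free $(\sigma_0,v)$-play enforces $b_0$, i.e.\ satisfies the parity condition of $\game^{b_0}$ (using Remark~\ref{remark:badmoves} and the product structure of $\game'$). \textbf{Enabling}: since the first two entries of $\summaryMax(v)$ are $b_0 < b_1$, there is a $(\sigma_0,v)$-uncovered play $\rho$ whose summary begins $(b_0, b_1, \ldots)$; by Remark~\ref{remark:leftshiftSummary} the prefix of $\rho$ up to its first bad move lands, after that move, in a vertex with summary $\leftshift{\smry}$, hence $\rho$ visits $\pre(\smrExact{\leftshift{\smry}})$ while remaining bad-move-free before that point. \textbf{Evading}: if some $(\sigma_0,v)$-play reached $\pre(\smrExact{\smrydash})$ for $\smrydash \in \evade{\smry}$ without a prior bad move, Player~$1$ could take that bad move and thereby force an uncovered play whose summary is lexicographically smaller than $\smry$ (because $\smrydash \llex \leftshift{\smry}$ and $\smrydash$ is not a strict prefix of $\leftshift{\smry}$), contradicting $\stratsummary{\sigma_0}{v} = \smry$. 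Also $v \notin \smrMore{\smry}$ is immediate, as $\summaryMax(v) = \smry$.

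For the converse: suppose $v \notin \smrMore{\smry}$ and fix an $\smry$-witness $\sigma$ from $v$. We must show $\summaryMax(v) = \smry$, i.e.\ $\summaryMax(v) \gelex \smry$ together with the already-assumed $\summaryMax(v) \not\glex \smry$. To get $\summaryMax(v) \gelex \smry$, we construct from $\sigma$ a strategy (or combine $\sigma$ with strategies witnessing the summaries reachable after bad moves, obtained inductively for the larger summaries $\leftshift{\smry}$ and the summaries in $\evade{\smry}$) whose strategy summary from $v$ is exactly $\smry$: before the first bad move the \textbf{Enforcing} property gives value $b_0$; when Player~$1$ makes a bad move, \textbf{Evading} guarantees the target has summary $\gelex \leftshift{\smry}$ and is not in a ``bad'' region, while \textbf{Enabling} guarantees that on some uncovered play such a bad move is actually available leading to summary exactly $\leftshift{\smry}$; from there we recurse on $\leftshift{\smry}$, which has strictly more $\blank$'s, so the recursion terminates. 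Concatenating the summaries via Remark~\ref{remark:playsummaryevolution} and Remark~\ref{remark:leftshiftSummary} yields strategy summary $\smry$ from $v$, hence $\summaryMax(v) \gelex \smry$; combined with $v \notin \smrMore{\smry}$ this gives $v \in \smrExact{\smry}$.

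The main obstacle will be the converse direction: a single $\smry$-witness only controls the play up to and including the \emph{first} bad move, so to actually realize the full summary $\smry = (b_0,\ldots,b_k,\blanks)$ one must stitch it together with witnesses for $\leftshift{\smry}$ from the vertices reached after that first bad move, and argue that the resulting composite strategy is (a) well-defined as a finite-memory object and (b) still adaptive and has the claimed strategy summary — in particular that Player~$1$ cannot drive an \emph{uncovered} play to a summary below $\smry$, which is exactly what \textbf{Evading} is engineered to rule out. Making the induction over summaries rigorous (the base case being summaries with a single non-$\blank$ entry, where \textbf{Enabling} and \textbf{Evading} are vacuous and only \textbf{Enforcing} matters) and keeping careful track of the covered/uncovered distinction when computing the lexicographic minimum is where the real work lies.
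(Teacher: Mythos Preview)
Your plan is essentially correct and aligns with the paper's proof, but two points deserve comment.

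First, your opening move for the direction ``$v\in\smrExact{\smry}$ implies witness'' --- starting from a strongly adaptive strategy $\sigma_0$ --- would be circular: this lemma is precisely what the paper later uses (via Lemma~\ref{lem:characterizationStronglyad}) to decide whether strongly adaptive strategies exist at all, and indeed they need not exist. You flag this yourself and propose the right fix, namely an adaptive strategy $\sigma$ with $\stratsummary{\sigma}{v}=\summaryMax(v)=\smry$; that is exactly what the paper does (citing the fact that the maximum in the definition of $\summaryMax$ is attained, their Remark~\ref{remark:stronglyadaptivestrat}.\ref{remark:stronglyadaptivestrat:optimumrealized}, rather than Remark~\ref{remark:existenceofplaysummary}, which concerns plays, not strategies). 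From there your verification of Enforcing, Enabling, and Evading matches the paper's.

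Second, for the converse the paper takes a slightly cleaner route than your explicit recursion. Rather than inductively building witnesses for $\leftshift{\smry}$ and gluing, it directly takes, for every prefix $\p'$ ending in a vertex $v'\in\smrMore{\smry}$, an adaptive strategy $\sigma_{\p'}$ with $\stratsummary{\sigma_{\p'}}{\p'}=\summaryMax(v')$ (again using that the max is attained), and forms the ``continuation'' strategy: play the given witness $\sigma$ until the first visit to $\smrMore{\smry}$, then switch to the appropriate $\sigma_{\p'}$. The remainder of the argument --- showing every $(\sigma_v,v)$-uncovered play has summary $\gelex \smry$, splitting on whether the vertex reached after the first bad move has summary $\gelex\leftshift{\smry}$ or a strict prefix of $\leftshift{\smry}$ (the latter being ruled out by coveredness) --- is exactly the case analysis you anticipate. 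Your recursion unfolds this same construction; the paper's version just avoids an explicit induction by invoking the max-achievers once and for all. Note also that your termination reason (``strictly more $\blank$'s'') is only literally correct for $\leftshift{\smry}$ itself; after a bad move one may land in $\smrExact{\smrydash}$ with $\smrydash\gelex\leftshift{\smry}$ having fewer $\blank$'s, but the induction is still well-founded since the first entry of the summary strictly increases.
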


We now give a method to compute $\smrExact{\smry}$ for each summary~$\smry \in \summaries$ by induction from the largest to the smallest summary. 
Since truth values in a summary are strictly increasing, $(1111,\blanks)$ is the maximal summary.
We have $\smrExact{(1111,\blanks)} = \enfExact{1111}$, which we can compute using Tabuada and Neider's result for classical rLTL games (see Section~\ref{section3}).
For the inductive step, assume that for a summary~$\smry = (b_0, \ldots, b_k, \blanks)$ the sets $\smrExact{\smrydash}$ are already computed for every $\smrydash \glex \smry$.
The set $\smrExact{\smry}$ can then be computed using the following algorithm:

\begin{enumerate}
\item\label{item:algo1:1} If $ k = 0$, then return $ \enfMore{b_0}\setminus \smrMore{\smry} $. Here, $\enfMore{b_0}$ can again be computed using Tabuada and Neider's result for classical rLTL games.

\item\label{item:algo1:3} Now assume $k > 0$.
Let $\arena_{\smry}$ be the subgraph of $\arena'$ restricted to the vertex set 
\[ \enfMore{b_0}\setminus \Big(\smrMore{\smry} \cup \bigcup\nolimits_{\smrydash\in \evade{\smry}} \reach{\smrExact{\smrydash}}\Big),\]
where $\reach{F}$ denotes the set of vertices of $\arena'$ from which Player~$1$ can force the token to reach $F$. This set can be computed in linear time (in the number of edges of $\arena'$) using standard methods to solve reachability games (see \cite{Automata_Book} for more details). 
In the proof of correctness of the algorithm (see Lemma~\ref{lemma:algoforsummary}) we show that $\arena_{\smry}$ does not have any terminal vertices.

Also, the sets~$\smrExact{\smrydash}$ for $\smrydash \in \evade{\smry}$ are already computed, because  the summaries~$\smrydash \in \evade{\smry}$ are all greater than $\smry$.

\item\label{item:algo1:4} Let $\win{{\smry}}$ be the winning region for Player~$0$ in the parity game with arena~$\arena_{\smry}$ and coloring as in the game~$\game^{b_0}$.
Return the set of vertices in Player~$0$'s winning region~$\win{\smry}$ from which $\pre\big(\smrExact{\leftshift{\smry}}\big)$ is reachable in the subgraph of $\arena'$ restricted to $\win{{\smry}}$.
\end{enumerate} 

\begin{lemma}\label{lemma:algoforsummary}
The algorithm described above computes the sets~$\smrExact{\smry}$ for $\smry \in \summaries$.	
\end{lemma}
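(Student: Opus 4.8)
The plan is to prove the lemma by induction on the summary~$\smry$, processing summaries from the lexicographically largest to the smallest; this is well founded because $\summaries$ is finite. The induction hypothesis is that $\smrExact{\smrydash}$ has already been computed correctly for every $\smrydash \glex \smry$, and since $\leftshift{\smry} \glex \smry$ and every $\smrydash \in \evade{\smry}$ satisfies $\smrydash \glex \smry$, all sets the algorithm refers to while processing $\smry$ — namely $\smrMore{\smry} = \bigcup_{\smrydash \glex \smry} \smrExact{\smrydash}$, $\smrExact{\leftshift{\smry}}$, and $\smrExact{\smrydash}$ for $\smrydash \in \evade{\smry}$ — are available. For the base case $\smry = (1111,\blanks)$ I would invoke the already-established identity $\smrExact{(1111,\blanks)} = \enfExact{1111}$ (once Player~0 enforces $1111$, Player~1 can make no bad move, so the summary is exactly $(1111,\blanks)$), which is what the algorithm returns. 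In the inductive step I would use Lemma~\ref{lem:characterization} as a bridge: it states that $v \in \smrExact{\smry}$ iff $v \notin \smrMore{\smry}$ and there is an $\smry$-witness from~$v$, so it suffices to show that the algorithm's output equals the set of vertices $v$ with $v \notin \smrMore{\smry}$ that admit an $\smry$-witness from~$v$. When $k = 0$, the Enabling and Evading clauses are vacuous, an $\smry$-witness from $v$ is just a strategy all of whose bad-move-free plays satisfy the parity condition of $\game^{b_0}$, and the uniform memoryless strategy $\sigma^{b_0}$ from Step~\ref{item:algo1:4} of the algorithm in Section~\ref{solving rLTL games} is such a witness exactly when $v \in \enfMore{b_0}$; conversely, the remark that $\smrExact{\smry} \subseteq \enfExact{b_0}$ together with Lemma~\ref{lem:characterization} shows that no witness can exist outside $\enfMore{b_0}$. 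Hence the output $\enfMore{b_0} \setminus \smrMore{\smry}$ is correct.

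The case $k > 0$ is the heart of the proof. First I would show that $\arena_{\smry}$ has no terminal vertices, so that Step~\ref{item:algo1:4} really solves a parity game: $\enfMore{b_0}$ is closed under Player~1's moves and live for Player~0 (Player~0's strategy enforcing $b_0$ remains applicable no matter how Player~1 moves), the complement of the attractor $\bigcup_{\smrydash \in \evade{\smry}}\reach{\smrExact{\smrydash}}$ is likewise Player~1-closed and Player~0-live, and removing $\smrMore{\smry}$ cannot create a dead end because a Player~1 move that leaves $\enfMore{b_0} \setminus \smrMore{\smry}$ while remaining in $\enfMore{b_0}$ must enter $\smrMore{\smry}$ (it is, or forces, a bad move of Player~1, which only increases the enforceable value). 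Given this, Step~\ref{item:algo1:4} yields $\win{\smry}$ and a memoryless winning strategy~$\tilde\sigma$ on $\arena_{\smry}$. For soundness, take $v$ in the algorithm's output; then $v \in \win{\smry} \subseteq \enfMore{b_0} \setminus \smrMore{\smry}$, so $v \notin \smrMore{\smry}$ and every vertex of $\win{\smry}$ has maximal enforceable value exactly~$b_0$. I build an $\smry$-witness $\sigma$ that first follows a fixed path inside $\win{\smry}$ from $v$ to $\pre(\smrExact{\leftshift{\smry}})$ and otherwise plays $\tilde\sigma$: Enforcing holds because parity conditions are prefix-independent; Enabling holds because no move inside $\win{\smry}$ is a bad move (the enforceable value stays $b_0$), so that path is realized by a bad-move-free play; and Evading holds because every bad-move-free play of $\sigma$ stays in $\win{\smry}$, which is disjoint from $\reach{\smrExact{\smrydash}}$, hence from the Player~1 part of $\pre(\smrExact{\smrydash})$, for all $\smrydash \in \evade{\smry}$. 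By Lemma~\ref{lem:characterization}, $v \in \smrExact{\smry}$.

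For completeness, take $v \in \smrExact{\smry}$; by Lemma~\ref{lem:characterization} there is an $\smry$-witness $\sigma$ from~$v$. Using $\summaryMax(v) = \smry$, the fact that the maximal enforceable value is constant along bad-move-free plays, and the Enforcing and Evading clauses, I would argue that every bad-move-free play of $\sigma$ stays inside $\arena_{\smry}$; this turns $\sigma$ into a Player~0 strategy in the parity game on $\arena_{\smry}$ that wins from~$v$, so $v \in \win{\smry}$, and the Enabling clause then exhibits a bad-move-free play confined to $\win{\smry}$ that visits $\pre(\smrExact{\leftshift{\smry}})$, placing $v$ in the algorithm's output. The step I expect to be the main obstacle is precisely these invariance and closure facts — that $\arena_{\smry}$ is dead-end-free, that a witness keeps its bad-move-free plays inside $\arena_{\smry}$, and that $\tilde\sigma$ keeps them inside $\win{\smry}$ — since they hinge on the exact interplay between bad moves, the monotonicity of the maximal enforceable value, the set $\evade{\smry}$, and the attractor closures; everything else reduces to combining standard parity-game facts (prefix independence, closure of winning regions and attractors under the appropriate player's moves) with Lemma~\ref{lem:characterization}.
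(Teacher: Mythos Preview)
Your overall plan (induction on summaries, reducing both inclusions to Lemma~\ref{lem:characterization}) matches the paper's proof. However, your soundness direction has a genuine gap in the construction of the $\smry$-witness.

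You build $\sigma$ as ``follow the fixed path inside $\win{\smry}$, otherwise play $\tilde\sigma$,'' and then claim that every bad-move-free $(\sigma,v)$-play stays in $\win{\smry}$. This claim is false in $\game'$: a Player~$1$ vertex $v' \in \win{\smry} \subseteq \enfExact{b_0}$ can have a successor $v'' \in \smrMore{\smry} \cap \enfExact{b_0}$ (e.g., $v'' \in \smrExact{\smrydash}$ with $\smrydash \glex \smry$ but the same first entry $b_0$). Such a move is \emph{not} a bad move, since the maximal enforceable value stays $b_0$, yet it leaves the vertex set of $\arena_\smry$. After this move your $\sigma$ is undefined (or arbitrary), so Enforcing may fail and, further down the play, Evading may be violated as well. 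Your appeal to prefix-independence of parity conditions does not help once the play is outside the region where $\tilde\sigma$ guarantees anything.

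The paper closes this gap by extending the witness beyond $\win{\smry}$: whenever the play first enters $\smrMore{\smry}$ at a vertex $v'$, it switches to an adaptive strategy $\sigma_{v'}$ that realizes $\summaryMax(v')$, and uses a default adaptive strategy $\sigma_0$ elsewhere. This composite strategy still enforces $b_0$ on every bad-move-free play (either the play stays in $\win{\smry}$ and eventually follows $\sigma(\smry)$, or it enters $\smrMore{\smry}$ and follows an adaptive strategy enforcing at least $b_0$), and the Evading argument then goes through because once inside $\smrMore{\smry}$ the first entry of the summary is still $b_0$, so no vertex there lies in $\pre(\smrExact{\smrydash})$ for $\smrydash \in \evade{\smry}$ without a preceding bad move. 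You need an analogous extension; as written, your $\sigma$ is not an $\smry$-witness in general. A secondary point: the paper derives the absence of terminal vertices in $\arena_\smry$ only \emph{after} proving $U \subseteq \smrExact{\smry}$, via Remark~\ref{remark:successor}, rather than by the direct closure argument you sketch (which, as you anticipate, is the delicate part and in your version is not yet watertight).
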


Now, we give a characterization of strongly adaptive strategies in terms of summary witnesses.

\begin{lemma}\label{lem:characterizationStronglyad}
In the game $\game'$, a strategy~$\sigma$ is strongly adaptive if and only if it is a $\summaryMax(\p)$-witness from every play prefix $\p$.
\end{lemma}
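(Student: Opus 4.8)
The plan is to prove both directions of the equivalence between a strategy being strongly adaptive and it being a $\summaryMax(\p)$-witness from every play prefix, leaning heavily on Lemma~\ref{lem:strongly ad if and only if e(p) is summaryMax(v)} and Lemma~\ref{lem:characterization}. First I would recall that Lemma~\ref{lem:strongly ad if and only if e(p) is summaryMax(v)} already reduces ``strongly adaptive'' to the condition $\stratsummary{\sigma}{\p} = \summaryMax(v)$ for every prefix $\p$ ending in $v$, and that Lemma~\ref{lem:characterization} characterizes $v \in \smrExact{\smry}$ (equivalently $\summaryMax(v) = \smry$) by the existence of \emph{some} $\smry$-witness from $v$ together with $v \notin \smrMore{\smry}$. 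So the real content of this lemma is to show that a strategy achieving $\stratsummary{\sigma}{\p} = \summaryMax(v)$ for \emph{all} $\p$ is exactly one that is simultaneously a witness for the correct summary from every prefix.

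For the forward direction, assume $\sigma$ is strongly adaptive, fix a prefix $\p$ ending in $v$ and let $\smry = \summaryMax(v)$. By Lemma~\ref{lem:strongly ad if and only if e(p) is summaryMax(v)} we have $\stratsummary{\sigma}{\p} = \smry$. I would then verify the three witness properties directly from this equality. \textbf{Enforcing}: every play in $\bmfree{\sigma}{\p}$ has no bad move of Player~$1$ after $\p$, so by Remark~\ref{remark:badmoves} its summary has constant first entry equal to the value $\sigma$ enforces from $\p$; since $\stratsummary{\sigma}{\p}$ has first entry $b_0$ and $\sigma$ is adaptive (hence enforces $b_0$ from $\p$, the maximal value), all such plays satisfy the parity condition of $\game^{b_0}$. \textbf{Enabling}: if $k \ge 1$ then $\stratsummary{\sigma}{\p}$ has at least two non-$\blank$ entries, so by Remark~\ref{remark:existenceofplaysummary} there is a $(\sigma,\p)$-uncovered play $\rho$ with $\playsummary{\sigma}{\p}{\rho} = \smry$; its first bad move of Player~$1$ leads to a vertex with summary $\leftshift{\smry}$ (that being the second entry), so the prefix of $\rho$ just before that move is in $\bmfree{\sigma}{\p}$ and visits $\pre(\smrExact{\leftshift{\smry}})$. \textbf{Evading}: if some play in $\bmfree{\sigma}{\p}$ reached $\pre(\smrExact{\smrydash})$ for $\smrydash \in \evade{\smry}$, Player~$1$ could branch off to a vertex with summary $\smrydash$, yielding a $(\sigma,\p)$-play whose summary begins $(b_0, \smrydash \ldots)$; I would argue this play (or an uncovered extension of it) has summary lexicographically smaller than $\smry$ — here the hypotheses $\smrydash \llex \leftshift{\smry}$ and ``$\smrydash$ not a strict prefix of $\leftshift{\smry}$'' are exactly what guarantee the comparison goes strictly the wrong way — contradicting $\stratsummary{\sigma}{\p} = \smry$ being the lexicographic minimum over uncovered plays. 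This is the step I expect to be the most delicate, since it requires carefully tracking how appending $\smrydash$ after $b_0$ compares lexicographically with the various uncovered-play summaries, and using the ``strict prefix'' bookkeeping from the covered/uncovered discussion.

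For the backward direction, assume $\sigma$ is a $\summaryMax(\p)$-witness from every prefix $\p$ (where I read $\summaryMax(\p)$ as $\summaryMax(v)$ for $v$ the last vertex, consistent with Remark~\ref{remark: two prefix same ending}). I would first note this forces $\sigma$ to be adaptive: the \textbf{Enforcing} property with $b_0$ the first entry of $\summaryMax(v)$, which by the remark after the definition of $\summaryMax$ equals the maximal value enforceable from $v$, means every bad-move-free play satisfies the $\game^{b_0}$ parity condition, and a standard argument (any deviation by Player~$1$ is a bad move that only increases the enforceable value, combined with Remark~\ref{remark:badmoves}) shows $\sigma$ enforces $b_0$ from $\p$. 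Then I would show $\stratsummary{\sigma}{\p} = \summaryMax(v) =: \smry$ by induction on $\smry$ from largest to smallest, mirroring the inductive structure already used to compute the $\smrExact{\smry}$. Given the witness properties, the \textbf{Enabling} and \textbf{Evading} conditions pin down exactly which vertices the bad-move-free plays can reach, so every uncovered $(\sigma,\p)$-play either makes no bad move (summary $(b_0,\blanks)$) or reaches $\pre(\smrExact{\leftshift{\smry}})$ and then, by the induction hypothesis applied to the post-bad-move prefix which lies in $\smrExact{\leftshift{\smry}}$, continues with summary $\leftshift{\smry}$, giving overall summary $\smry$ via Remark~\ref{remark:leftshiftSummary}; taking the lexicographic minimum over uncovered plays yields $\smry$ since the no-bad-move play is covered by one realizing $\smry$. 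Finally, Lemma~\ref{lem:strongly ad if and only if e(p) is summaryMax(v)} converts ``$\stratsummary{\sigma}{\p} = \summaryMax(v)$ for all $\p$'' back into ``$\sigma$ strongly adaptive,'' completing the equivalence. The main obstacle throughout is the Evading-direction lexicographic argument and making the induction in the backward direction interlock cleanly with the already-established computation of $\smrExact{\smry}$.
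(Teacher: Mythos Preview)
Your proposal is correct and follows essentially the same approach as the paper. The paper achieves both directions more economically by explicitly recycling the proof of Lemma~\ref{lem:characterization}: for the direction ``strongly adaptive $\Rightarrow$ witness'' it simply invokes the general statement proved there (that $\stratsummary{\sigma}{\p}=\smry$ implies $\sigma$ is an $\smry$-witness from $\p$), and for ``witness $\Rightarrow$ strongly adaptive'' it observes that setting $\sigma_{\p'}=\sigma$ in the continuation construction of Lemma~\ref{lem:characterization} yields $\sigma'=\sigma$, so the inequality $\stratsummary{\sigma}{\p}\gelex\smry$ established there applies directly; you instead re-derive both arguments inline, which is fine but longer.
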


Now, we show how to decide whether a strategy satisfying the condition given in Lemma~$\ref{lem:characterizationStronglyad}$ exists, i.e., a strategy that is a $\summaryMax(\p)$-witness from every play prefix~$\p$.
Furthermore, if such a strategy exists, we compute one.
To do so, we present a reduction to another type of game, called \textit{obliging games}. So, before describing the details of the reduction, let us recapitulate the definitions and useful results on obliging games.

Obliging games are two-player games introduced by Chatterjee et al.~\cite{Obliging}. They have two winning conditions, $\strong$ and $\weak$, called strong and weak conditions. The objective of Player~$0$ is to ensure the strong winning condition while allowing Player~$1$ to cooperate with him to additionally fulfil the weak winning condition. Formally, a strategy $\sigma$ for Player~$0$ is \textit{uniformly gracious} if it satisfies the following:
\begin{itemize}
\item for every vertex~$v$, every $(\sigma,v)$-play is $\strong$-winning, and
\item for every play prefix $\p$ consistent with  $\sigma$, there is a $\weak$-winning $(\sigma,\p)$-play.
\end{itemize}
We are only interested in parity/B\"uchi obliging games (i.e., the strong condition is a parity condition, and the weak one is a B\"uchi condition). The next theorem follows directly from the results by Chatterjee et al.~\cite{Obliging}.

\begin{theorem}\label{thm: Obliging}
A parity/B\"uchi obliging game with $n$ vertices and a parity condition
with $k$ colors can be reduced to a parity game with $O(n)$ vertices and $O(k)$ colors. Moreover, if Player~$0$ has a uniformly gracious strategy in such an obliging game, he has a uniformly gracious strategy with a memory of size at most $O(k)$. 
\end{theorem}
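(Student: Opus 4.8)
The plan is to invoke the results of Chatterjee et al.~\cite{Obliging} on obliging games and translate their complexity bounds into our setting. I would proceed in two parts, mirroring the two claims of the theorem.

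\textbf{Reduction to parity games.} First I would recall from~\cite{Obliging} the construction that turns a parity/B\"uchi obliging game into an (ordinary) parity game: the idea is to augment the arena with a bounded-memory gadget that tracks progress towards the B\"uchi (weak) condition, allowing Player~$1$ to ``request'' visits to the weak-accepting set, and to encode into the parity condition both the original strong parity objective and the requirement that every pending request is eventually honoured. Since the weak condition is a B\"uchi condition (a single rank), the memory needed for this gadget is constant, so the resulting arena has $O(n)$ vertices; combining the strong parity condition (with $k$ colors) with the B\"uchi-honouring condition (which can be encoded with $O(1)$ colors, or rotated into the existing priorities) yields a parity condition with $O(k)$ colors. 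I would state this explicitly, citing the relevant construction/theorem number from~\cite{Obliging}, and observe that a uniformly gracious strategy in the obliging game corresponds exactly to a winning strategy in the constructed parity game, and vice versa — this equivalence is what justifies solving obliging games by first solving the associated parity game.

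\textbf{Memory bound for uniformly gracious strategies.} For the second claim, I would note that a winning strategy in a parity game can be chosen memoryless over the augmented arena; pushing this memoryless strategy back through the reduction gives a strategy on the original arena whose memory is exactly the memory of the gadget, which is $O(k)$ (in fact $O(1)$ for a single B\"uchi rank, but stating $O(k)$ is safe and matches the downstream use). Again this is essentially a restatement of a result in~\cite{Obliging}; the only work is to make sure the memory accounting from their construction is reported in terms of $n$ and $k$ as in our statement.

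\textbf{Main obstacle.} There is no real mathematical obstacle here — the theorem is explicitly flagged in the text as following ``directly from the results by Chatterjee et al.''. The only care needed is bookkeeping: matching the parameter conventions of~\cite{Obliging} (which may phrase things in terms of edges, or in terms of the number of colors of the combined objective, or may treat the weak condition more generally than B\"uchi) to the clean $O(n)$/$O(k)$ statement we want, and confirming that specializing their (possibly more general, e.g.\ Streett or generalized-B\"uchi) weak condition to a single B\"uchi condition does not introduce a non-constant blow-up in the memory gadget. So the ``hard part'', such as it is, is purely a matter of carefully quoting and instantiating the external result rather than proving anything new.
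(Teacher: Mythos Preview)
Your proposal is correct and matches the paper's approach exactly: the paper provides no proof of its own for this theorem, stating only that it ``follows directly from the results by Chatterjee et al.~\cite{Obliging}''. Your write-up simply unpacks that citation, which is precisely what is called for.
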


Now, coming back to our problem, we define obliging games~$\game_{\smry}$ (for each $\smry\in \summaries$), which are subgames of $\game'$, such that a uniformly gracious strategy in $\game_{\smry}$ satisfies the properties of an $\smry$-witness locally. In particular, the games are defined in way such that the strong condition resembles the Enforcing property, the weak condition resembles the Enabling property, and the restricted vertex set ensures that the Evading property is satisfied.

\begin{definition}
Given a summary $\smry = (b_0,\ldots,b_k,\blanks)\in \summaries$, let $\game_{\smry} $ be the obliging game obtained from $\game'$ as follows:
\begin{itemize}
\item The set of vertices $V(\game_{\smry})$ is the set $\smrExact{\smry}\cup \{\qnew\}$, where $\qnew$ is a new vertex that does not belong to $V'$.
\item The set of edges $E(\game_{\smry})$ contains the following edges:
	\begin{itemize}
	\item The edges of the game $\game'$ restricted to the vertex set $\smrExact{\smry}$.
	\item All edges of the form $(v,\qnew)$ where $v$ is a terminal vertex in the game~$\game'$ restricted to $\smrExact{\smry}$.
	\item A self loop on $\qnew$.
	\end{itemize}
\item The strong condition $\strong_{\smry}$ is a min-parity condition such that the color of $\qnew$ is $0$ and the color of any other vertex is the same as in $\game^{b_0}$.
\item If $k=0$, then there is no weak condition, i.e., $\weak_{\smry}$ is a B\"uchi condition with $F = V(\game_{\smry})$. If $k>0$, then the weak condition $\weak_{\smry}$ is a B\"uchi condition with $F = \pre(\leftshift{\smry}) \cup \{\qnew\}$.
\end{itemize}
\end{definition}

The following lemma formalizes the connection between uniformly gracious strategies in the obliging games~$\game_\smry$ and strongly adaptive strategies in $\game'$.

\begin{lemma}\label{lemma:gracioustostronglyad}
There exists a strongly adaptive strategy in $\game'$ if and only if there exists a uniformly gracious strategy in every obliging game $\game_{\smry}$.
Given a uniformly gracious strategy with finite memory in each obliging game $\game_{\smry}$, one can effectively combine these into a strongly adaptive strategy with finite memory in $\game'$.
\end{lemma}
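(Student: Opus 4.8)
The plan is to prove both directions via the characterization of strongly adaptive strategies in terms of summary witnesses (Lemma~\ref{lem:characterizationStronglyad}): $\sigma$ is strongly adaptive in $\game'$ if and only if it is a $\summaryMax(\p)$-witness from every play prefix~$\p$. The central observation is that the three defining properties of a $\smry$-witness only constrain a strategy as long as the play stays inside the region~$\smrExact{\smry}$: by Remark~\ref{remark:playsummaryevolution} and the region decomposition computed in Lemma~\ref{lemma:algoforsummary}, a play consistent with an adaptive strategy stays in $\smrExact{\smry}$ precisely until Player~$1$ makes a bad move, and (by the exclusion of $\bigcup_{\smrydash\in\evade{\smry}}\reach{\smrExact{\smrydash}}$ in the algorithm, together with the closure properties of the regions established there) such a move can only land in $\smrExact{\leftshift{\smry}}$. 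The obliging game~$\game_{\smry}$ is exactly $\game'$ restricted to $\smrExact{\smry}$ with the extra vertex~$\qnew$ absorbing the terminal vertices, and its winning conditions are tailored so that Enforcing becomes the strong parity condition~$\strong_{\smry}$ (with $\qnew$ coloured~$0$, so escaping through a terminal vertex never hurts), Enabling becomes the weak B\"uchi condition~$\weak_{\smry}$, and Evading is guaranteed by the restricted vertex set.

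For the direction from a strongly adaptive strategy to uniformly gracious ones, given a strongly adaptive $\sigma$ in $\game'$ I would define, for each $\smry$, the strategy~$\sigma_{\smry}$ in $\game_{\smry}$ that mimics $\sigma$ on prefixes staying inside $\smrExact{\smry}$ and loops in $\qnew$ afterwards; this is well defined because an adaptive strategy keeps the play in $\smrExact{\smry}$ from Player~$0$'s moves, and forced escapes correspond to terminal vertices. Uniform graciousness then follows from Lemma~\ref{lem:characterizationStronglyad}: every $(\sigma_{\smry},v)$-play either reaches $\qnew$ (and wins $\strong_{\smry}$ since $\qnew$ has colour~$0$) or stays in $\smrExact{\smry}$ forever, in which case it corresponds to a bad-move-free $(\sigma,v)$-play and hence satisfies the $\game^{b_0}$-parity condition by Enforcing; and from every prefix consistent with $\sigma_{\smry}$, applying Enabling at the corresponding $\game'$-prefix — the witness property holds at \emph{every} prefix — yields a bad-move-free play reaching $\pre(\smrExact{\leftshift{\smry}})$, and iterating this at each subsequent prefix produces a play visiting $\pre(\smrExact{\leftshift{\smry}})\cup\set{\qnew}$ infinitely often, i.e.\ a $\weak_{\smry}$-winning play. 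When $k=0$ the weak condition is trivial.

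For the converse and, simultaneously, the effective combination, I would take finite-memory uniformly gracious strategies~$\sigma_{\smry}$ (of bounded memory size, Theorem~\ref{thm: Obliging}) and assemble $\sigma$ in $\game'$ as follows: at a prefix~$\p$ ending in~$v$, let $\smry=\summaryMax(v)$ (which depends only on~$v$) and play according to $\sigma_{\smry}$ run on the suffix of~$\p$ since the token last entered $\smrExact{\smry}$, i.e.\ since the last bad move, or since the start. Since along any play consistent with $\sigma$ the first entry of $\summaryMax$ is enforced and equals the maximal enforceable value, $\sigma$ is adaptive, so by Remark~\ref{remark:badmoves} there are at most four bad moves; hence the play passes through at most five regions and $\sigma$ needs only a bounded composition of the memory structures of the $\sigma_{\smry}$'s, giving finite memory. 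To see $\sigma$ is strongly adaptive I would verify it is a $\summaryMax(\p)$-witness from every~$\p$: fix $\p$ ending in $v\in\smrExact{\smry}$; the $(\sigma,\p)$-plays staying in $\smrExact{\smry}$ are exactly the $(\sigma_{\smry},v)$-plays in $\game_{\smry}$, so the strong condition gives Enforcing, the weak condition (being witnessed from every consistent prefix) gives a bad-move-free play reaching $\pre(\smrExact{\leftshift{\smry}})$, i.e.\ Enabling, and Evading holds because $\smrExact{\smry}$ avoids $\reach{\smrExact{\smrydash}}$ for all $\smrydash\in\evade{\smry}$; plays leaving $\smrExact{\smry}$ do so only via a bad move into $\smrExact{\leftshift{\smry}}$, from where the argument recurses with $\sigma_{\leftshift{\smry}}$. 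By Lemma~\ref{lem:characterizationStronglyad}, $\sigma$ is strongly adaptive.

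I expect the main obstacle to be the bookkeeping at the region interfaces: namely showing that (i) bad-move-free plays under the assembled strategy genuinely stay within a single region — an appropriate closure property of $\smrExact{\smry}$ under legal non-bad moves, together with the companion fact that the only escapes are bad moves landing exactly in $\smrExact{\leftshift{\smry}}$ — and (ii) that the universal ``from every prefix'' quantifier in the definitions of $\smry$-witness and of strongly adaptive strategy is faithfully captured by the ``from every consistent prefix'' clause of uniform graciousness combined with the choice of a B\"uchi (rather than reachability) weak condition. Both hinge on the structural properties of the regions $\smrExact{\smry}$ proved in Lemmas~\ref{lem:characterization} and~\ref{lemma:algoforsummary} and on Remark~\ref{remark:playsummaryevolution}; once these are in place, the rest is a routine translation between the $\rLTL$ game and the obliging games, together with the memory bound from Theorem~\ref{thm: Obliging}.
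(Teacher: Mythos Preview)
Your overall approach matches the paper's: use Lemma~\ref{lem:characterizationStronglyad} to reduce both directions to the Enforcing/Enabling/Evading properties, and translate these into the strong/weak/arena-restriction components of $\game_{\smry}$. The combination of the $\sigma_{\smry}$'s by ``restart at the suffix since the last region change'' is also exactly what the paper does.

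The gap is in your closure claim. You write that ``bad-move-free plays under the assembled strategy genuinely stay within a single region'' and that ``the only escapes are bad moves landing exactly in $\smrExact{\leftshift{\smry}}$''. This is false in general: from a Player~$1$ vertex $v\in\smrExact{\smry}$ there can be an edge to some $v'\in\smrExact{\smrydash}$ where $\smrydash$ is a \emph{strict prefix} of $\smry$ (same first entry $b_0$, hence not a bad move); see Remark~\ref{remark:successor}. So bad-move-free plays may drift through several regions with the same first entry. The paper handles this differently in the two directions. For the forward direction (strongly adaptive $\Rightarrow$ gracious) it proves, using the \emph{strongly} adaptive property of $\sigma$, that the Enabling-witness play actually stays inside $\smrExact{\smry}$ until it hits $\pre(\smrExact{\leftshift{\smry}})$: both possibilities $\smrydash\glex\smry$ and ``$\smrydash$ strict prefix of $\smry$'' are ruled out by contradiction with $\stratsummary{\sigma}{\cdot}=\summaryMax(\cdot)$. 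For the converse, the paper does \emph{not} claim the assembled strategy keeps plays in one region. Instead, via Remark~\ref{remark:stratsuccessor}, any bad-move-free $(\sigma,\p)$-play eventually settles in some $\smrExact{\smrydash}$ with $b_0'\ge b_0$, so a suffix is $\strong_{\smrydash}$-winning, which yields Enforcing; and for Evading one shows that if the play reached $\pre(\smrExact{\smry^\star})$ with $\smry^\star\in\evade{\smry}$ while currently sitting in $\smrExact{\smrydash}$, then in fact $\smry^\star\in\evade{\smrydash}$, contradicting the exclusion of $\reach{\smrExact{\smry^\star}}$ in the construction of $\arena_{\smrydash}$.

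In short: your plan is the right one, but the ``single region'' closure you rely on is not a structural fact about the regions; in one direction it must be \emph{proved} from strong adaptivity, and in the other direction it simply does not hold and must be replaced by the eventual-stabilization and $\evade{\cdot}$-propagation arguments above.
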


Since the game $\game'$ has doubly-exponential size, using \autoref{thm: Obliging}, the parity\slash{}Büchi obliging games $\game_{\smry}$ can be reduced to doubly-exponential-sized parity games. Once we computed a strongly adaptive strategy for $\game'$, it can then be reduced to a strongly adaptive strategy for the original game $\game$.

Moreover, note that strongly adaptive strategies also have doubly-exponential memory since the obliging games we constructed have doubly-exponential size.
By \autoref{thm: Obliging}, uniformly gracious strategies in such obliging games require memory of linear size, leading to the following result.

\begin{theorem}
Given an $\rLTL$ game, one can decide in doubly-exponential time whether Player~$0$ has a strongly adaptive strategy. If yes, one can compute one with doubly-exponential memory  in doubly-exponential time.
\end{theorem}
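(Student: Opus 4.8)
The plan is to assemble the theorem from the machinery developed earlier in the section, chiefly Lemma~\ref{lemma:gracioustostronglyad} (the reduction of strongly adaptive strategies to uniformly gracious strategies in the obliging games~$\game_\smry$), Lemma~\ref{lemma:algoforsummary} (the inductive algorithm for the sets~$\smrExact{\smry}$), and Theorem~\ref{thm: Obliging} (the solution of parity/Büchi obliging games), together with the size bounds established in Section~\ref{solving rLTL games} for the game~$\game'$. Concretely, I would argue that the following algorithm solves Problem~\ref{problem 2} within the stated bounds: (i)~build the generalized Büchi automata~$\mathcal A^b$, determinize them to the parity automata~$\mathcal C^b$, and form the product game~$\game'$; (ii)~compute the sets~$\smrExact{\smry}$ for all $\smry\in\summaries$ by the induction of Lemma~\ref{lemma:algoforsummary}, from the maximal summary~$(1111,\blanks)$ downwards, which in turn yields $\summaryMax(v)$ for every vertex~$v$ of~$\game'$; (iii)~for each $\smry\in\summaries$ construct the obliging game~$\game_\smry$, reduce it to a parity game via Theorem~\ref{thm: Obliging}, and solve it; (iv)~if Player~$0$ has a uniformly gracious strategy in \emph{every} $\game_\smry$, combine them (Lemma~\ref{lemma:gracioustostronglyad}) into a strongly adaptive strategy for~$\game'$ and transfer it to~$\game$ as in Step~$5$ of Section~\ref{solving rLTL games}; otherwise report that no strongly adaptive strategy exists. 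Correctness is then exactly the chain ``strongly adaptive in $\game$ $\iff$ strongly adaptive in $\game'$ $\iff$ uniformly gracious in every $\game_\smry$'', the first equivalence being the observation at the start of Section~\ref{solving rLTL games} (applied also in the strongly-adaptive setting at the top of Section~5.5), the second being Lemma~\ref{lemma:gracioustostronglyad}.

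For the complexity bound I would track sizes through the pipeline. By Theorem~\ref{thm:rLTL-Buchi} each $\mathcal A^b$ has $2^{O(\cl)}$ states; determinization (\cite{Buchi-Parity}) gives $\mathcal C^b$ with $2^{O(n\log n)}=2^{2^{O(\cl)}}$ states and $O(n)$ colors; hence $\game'$ has $|V|\cdot 2^{2^{O(\cl)}}$ vertices and $O(n)$ colors, i.e.\ doubly-exponential size. Step~(ii) performs a constant number (at most $|\summaries|$, and $|\summaries|$ is an absolute constant since it is the number of strictly increasing tuples over the five-element set~$\valueset$) of reachability and parity-game solves on $\game'$ (using the classical rLTL result of Tabuada--Neider for the $\enfMore{b}$ computations, as noted in the algorithm), each of which is polynomial in $|\game'|$ because the number of colors is below $\lg$ of the number of vertices (\cite{calude}), hence doubly-exponential time overall. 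Step~(iii) invokes Theorem~\ref{thm: Obliging} a constant number of times, each producing a parity game of size $O(|\game_\smry|)\le O(|\game'|)$ with $O(n)$ colors, again solvable in polynomial time in its size by~\cite{calude}. So the whole decision procedure runs in doubly-exponential time. For the memory bound: Theorem~\ref{thm: Obliging} gives a uniformly gracious strategy of memory $O(k)=O(n)$ in each~$\game_\smry$; Lemma~\ref{lemma:gracioustostronglyad} combines the constantly-many finite-memory strategies into a finite-memory strongly adaptive strategy for~$\game'$, whose memory is a product of the constituent memories together with the region information $\summaryMax(\cdot)$, hence still polynomial in $|\game'|$, i.e.\ doubly-exponential; transferring to~$\game$ multiplies the memory by $|Q^{0000}\times\cdots\times Q^{1111}|$, which remains doubly-exponential. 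Finally, because the $\LTL$-synthesis lower bound of~\cite{LTL_synthesis,LTL_synthesis_memory} already applies to the special case of $\rLTL$ games where the specification is Boolean, the doubly-exponential time bound is tight (2EXPTIME-completeness) and the doubly-exponential memory bound is tight as well, although the theorem statement as given only asserts the upper bounds.

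The main obstacle is not this final bookkeeping but the correctness of the inductive computation of~$\smrExact{\smry}$ and of the obliging-game reduction — i.e.\ Lemmas~\ref{lem:characterization}, \ref{lemma:algoforsummary}, \ref{lem:characterizationStronglyad}, and~\ref{lemma:gracioustostronglyad}, which I would treat as already proved here. Assuming those, the only genuinely delicate point left in the present theorem is verifying that the number of summaries is an absolute constant (so that ``constantly many parity-game solves'' is legitimate) and that the memory-combination in Lemma~\ref{lemma:gracioustostronglyad} does not blow the memory beyond doubly-exponential — both of which are immediate once one observes that $\summaries\subseteq(\valueset\cup\set\blank)^5$ is finite and that finite products of doubly-exponential-sized memories are still doubly-exponential. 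Hence the proof reduces to citing the earlier lemmas in sequence and carrying out the size accounting above.
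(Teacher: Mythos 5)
Your proposal is correct and follows essentially the same route as the paper: construct $\game'$, compute the sets~$\smrExact{\smry}$ via Lemma~\ref{lemma:algoforsummary}, reduce each obliging game~$\game_{\smry}$ to a parity game via Theorem~\ref{thm: Obliging}, and combine the uniformly gracious strategies via Lemma~\ref{lemma:gracioustostronglyad}, with the same doubly-exponential size accounting. Your write-up is in fact more explicit than the paper's brief concluding paragraphs, e.g.\ in noting that $\abs{\summaries}$ is an absolute constant so only constantly many parity-game solves are needed.
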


Note that by dualizing the definitions and the constructions, an analogous result for Player~$1$ can also be obtained.

\section{Conclusion}

We argued that in a reactive system, in addition to correctness, we also need to ensure robustness. 
To this end, we introduced adaptive strategies for $\rLTL$ games that satisfy the specification to a higher degree when the environment is not antagonistic. We also presented a stronger version of adaptive strategies that additionally maximizes the opportunities for the opponent to make bad choices. Finally, we showed that both adaptive and strongly adaptive strategies can be computed in doubly-exponential time. As we know that the classical $\LTL$ and $\rLTL$ synthesis algorithms also take doubly-exponential time, we conclude that adaptive and strongly adaptive strategies are not harder to compute.

\new{In future work, we aim to investigate even more general notions of adaption to the behavior of a not necessarily antagonistic environment. 
Possible approaches include observing the environment's behavior and trying to compare that to optimal strategies for the environment.}

 
\bibliographystyle{splncs04}
\bibliography{bib}

\newpage
\appendix

In this appendix, we present the proofs omitted in the main part. 

\section{Combining Adaptive Strategies}

Let us begin by introducing a  useful preliminary result:
in many of the constructions presented below, we need to combine several strategies into a new one while maintaining adaptiveness.
The following lemma will be useful to prove this.

\begin{lemma}\label{lem:follow adaptive is adaptive}
Let $\sigma$ be a strategy for Player~$0$ in $\game'$ such that for all play prefixes~$\p$ and all $(\sigma,\p)$-plays~$\rho$, the following are both satisfied: 
\begin{itemize}
\item $\rho$ does not contain a bad move of Player~$0$ after the play prefix~$\p$.
\item There is an adaptive strategy~$\sigma_\rho$ such that $\rho$ is a $(\sigma_\rho,\p')$-play for some prefix~$\p'$ of $\rho$.
\end{itemize}
Then, $\sigma$ is adaptive.
\end{lemma}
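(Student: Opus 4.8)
The plan is to show that $\sigma$ enforces, from every play prefix $\p$, the maximal value that any Player~$0$ strategy can enforce from $\p$. Fix a play prefix $\p$ ending in some vertex $v$, and let $b$ be the maximal value that can be enforced from $\p$; by Remark~\ref{remark_enforcementcharac} and Remark~\ref{remark: two prefix same ending}, this is a property of the last vertex only, so it suffices to produce one $(\sigma,\p)$-play of value $b$ and to rule out $(\sigma,\p)$-plays of value below $b$ — actually, we must show \emph{every} $(\sigma,\p)$-play has value $\ge b$. So let $\rho$ be an arbitrary $(\sigma,\p)$-play; I must argue $\valuation(\rho)\ge b$.

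First I would invoke the second hypothesis: there is an adaptive strategy $\sigma_\rho$ and a prefix $\p'$ of $\rho$ such that $\rho$ is a $(\sigma_\rho,\p')$-play. Without loss of generality $\p'$ extends $\p$ (if $\p'$ is shorter, note $\rho$ is then also a $(\sigma_\rho,\p)$-play since $\p$ is a prefix of $\rho$ and adaptivity of $\sigma_\rho$ does not depend on how the prefix was produced; more carefully, $\rho$ being a $(\sigma_\rho,\p')$-play with $\p'$ a prefix of $\rho$ and $\p$ another prefix of $\rho$ means $\rho$ is also a $(\sigma_\rho,\p)$-play when $\p$ is at least as long as $\p'$, and when $\p$ is shorter we still have that $\rho$ agrees with $\sigma_\rho$ at all Player~$0$ positions from $|\p'|-1$ onward; the cleanest route is to take $\p'' = $ the longer of $\p,\p'$, which is a prefix of $\rho$, and observe $\rho$ is a $(\sigma_\rho,\p'')$-play and also a $(\sigma,\p)$-play, hence a $(\sigma,\p'')$-play by the first hypothesis — see below). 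The key point I would then use is the first hypothesis: $\rho$ contains no bad move of Player~$0$ after $\p$. By the definition of a bad move, this means that for every prefix $\q$ of $\rho$ extending $\p$, the value Player~$0$ can still enforce from $\q$ is the same as from $\p$, namely $b$.

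Now I would run the following argument to transfer the value. Since $\sigma_\rho$ is adaptive and $\rho$ is a $(\sigma_\rho,\p'')$-play where $\p''$ extends $\p$ and is a prefix of $\rho$, the maximal value Player~$0$ can enforce from $\p''$ is some $b''$, and $\valuation(\rho)\ge b''$ because $\sigma_\rho$ enforces $b''$ from $\p''$ and $\rho$ is a $(\sigma_\rho,\p'')$-play. But $\p''$ is a prefix of $\rho$ extending $\p$, and since $\rho$ has no bad move of Player~$0$ after $\p$, the enforceable value cannot have decreased along $\rho$ between $\p$ and $\p''$; moreover a bad move of Player~$1$ only \emph{increases} the value Player~$0$ can enforce (as noted in Section~\ref{sec:bad moves}), so $b'' \ge b$. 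Therefore $\valuation(\rho)\ge b'' \ge b$. Since $\rho$ was an arbitrary $(\sigma,\p)$-play, $\sigma$ enforces $b$ from $\p$, and as $b$ is by choice the maximal enforceable value from $\p$, $\sigma$ is adaptive.

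The main obstacle I anticipate is handling the interplay between the two prefixes $\p$ (over which $\sigma$ operates) and $\p'$ (over which $\sigma_\rho$ operates), together with the delicate bookkeeping around \emph{why} the enforceable value is non-decreasing along $\rho$: this needs the characterization that, absent bad moves of Player~$0$, the enforceable value stays constant (Remark~\ref{remark:badmoves}), plus the fact that Player~$1$'s bad moves can only help Player~$0$. I would make sure to phrase the "no bad move of Player~$0$ after $\p$" hypothesis precisely in terms of enforceable values so that it plugs directly into Remark~\ref{remark:badmoves} and the monotonicity observation, and I would be careful that the first hypothesis ("$\rho$ does not contain a bad move of Player~$0$ after $\p$") is what lets us treat $\p''$ interchangeably with $\p$ as far as enforceable values go.
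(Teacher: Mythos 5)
Your proposal is correct and follows essentially the same route as the paper's proof: both hinge on the observation that $\p$ and $\p'$ are prefixes of the same play $\rho$, handle the two cases of which one is longer, use the ``no bad move of Player~$0$ after $\p$'' hypothesis to conclude that the enforceable value does not decrease up to the longer prefix, and then invoke adaptivity of $\sigma_\rho$ to bound $\valuation(\rho)$ from below. The only difference is that you argue directly while the paper argues by contradiction, which is immaterial.
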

\begin{proof}
Towards a contradiction, suppose $\sigma$ is a strategy for Player~$0$ in $\game'$ satisfying the given properties, but is not adaptive.
Then, by definition, for some play prefix~$\p$, there exists some strategy $\sigma'$ that enforces some value~$b$ from $\p$, but $\sigma$ does not. 
That means there exists a $(\sigma,\p)$-play~$\rho$ which has a value strictly less than $b$. 
By the given properties, there exists an adaptive strategy $\sigma_\rho$ such that $\rho$ is a $(\sigma_\rho, \p')$-play for some prefix~$\p'$ of $\rho$. 
Since both $\p$ and $\p'$ are prefixes of $\rho$, one has to be the prefix of the other. 
If $\p'$ is a prefix of $\p$, then $\rho$ is also a $(\sigma_\rho,\p)$-play.
As $\sigma_\rho$ is adaptive, it enforces value~$b$ from $\p$, i.e., we have derived a contradiction to $\valuation(\rho) < b$. 

Now, assume that $\p$ is a prefix of $\p'$. 
Since $\sigma'$ enforces value $b$ from $\p$ and Player~$0$ does not make a bad move after prefix~$\p$ in the play $\rho$, Player~$0$ can still enforce value~$b$ from $\p'$. Then, by the definition of adaptive strategies, $\sigma_\rho$ enforces the value $b$ from $\p'$ (in particular for the $(\sigma_\rho,\p')$-play~$\rho$), i.e., we have again derived a contradiction to $\valuation(\rho) < b$.
\end{proof}

\section{Proof of Lemma~\ref{lem:strongly ad if and only if e(p) is summaryMax(v)}}
We use the following properties to simplify the proof.
\begin{remark}
\label{remark:stronglyadaptivestrat}
\hfill
\begin{enumerate}
\item \label{remark:stronglyadaptivestrat:charac} Let $\sigma$ be an adaptive strategy for Player~$0$. Then, $\sigma$ is strongly adaptive if and only if $\stratsummary{\sigma}{\p} = \summaryMax(\p)$  for every play prefix $\p$.

\item \label{remark:stronglyadaptivestrat:simplific} Let $\sigma$ be an adaptive strategy for Player~$0$  and $\p$ a play prefix. Then, $\stratsummary{\sigma}{\p} = \summaryMax(\p)$ if and only if $\stratsummary{\sigma}{\p} \gelex \summaryMax(\p)$.

\item \label{remark:stronglyadaptivestrat:optimumrealized} Let $\p$ be a play prefix. Then, there is a strategy $\sigma$ for Player~$0$ such that $\stratsummary{\sigma}{\p} = \summaryMax(\p)$.

\end{enumerate}	
\end{remark}

Now, let us prove Lemma~\ref{lem:strongly ad if and only if e(p) is summaryMax(v)}. Recall that we need to prove that a strategy $\sigma$ for Player~$0$ in the game $\game'$ is strongly adaptive if and only if for every play prefix~$\p$ ending in vertex~$v$, it holds that $\stratsummary{\sigma}{\p}) = \summaryMax(v)$.

\begin{proof}
By Remark~\ref{remark:stronglyadaptivestrat}.\ref{remark:stronglyadaptivestrat:charac}, a strategy $\sigma$ for Player~$0$ is strongly adaptive if and only if $\stratsummary{\sigma}{\p}= \summaryMax(\p)$ for every play prefix~$\p$. Hence, it is enough to show that for any two prefixes $\p_1$ and $\p_2$ ending in the same vertex, it holds that $\summaryMax(\p_1) = \summaryMax(\p_2)$.
The result then follows by picking $\p_1 = \p$ and $\p_2 = v$.

Suppose, towards a contradiction, and without loss of generality, $\summaryMax(\p_1) \llex \summaryMax(\p_2)$. Let $\sigma_i$ for $i\in \{1,2\}$ be an adaptive strategy that maximizes the summary from $\p_i$ over all strategies. Then, we define the strategy $\sigma'$ for Player~$0$ defined as
\[
\sigma'(\p) = \begin{cases}
 	\sigma_2(\p_2\p') & \text{ if $\p = \p_1\p'$ for some (possibly empty) $\p'$}\\
 	\sigma_1(\p) &\text{ otherwise.}
 \end{cases}
\]
Applying Lemma~\ref{lem:follow adaptive is adaptive} shows that $\sigma'$ is adaptive. Hence, $\stratsummary{\sigma'}{\p_1} = \stratsummary{\sigma_2}{\p_2}$ in the game $\game'$, as $\sigma'$ behaves after the prefix $\p_1$ like $\sigma_2$ does after the prefix~$\p_2$. 
Thus, it holds that 
\[\stratsummary{\sigma'}{\p_1} = \stratsummary{\sigma_2}{\p_2} = \summaryMax(\p_2) \glex \summaryMax(\p_1) = \stratsummary{\sigma_1}{\p_1},\] which contradicts the maximality of $\sigma_1$. 
\end{proof}

\section{Proof of Lemma~\ref{lem:characterization}}

Recall that we need to prove that in the game $\game'$, for some summary $\smry$ and for some vertex~$v$, we have $v\in \smrExact{\smry}$ if and only if $v \not \in \smrMore{\smry}$ and there is an $\smry$-witness from $v$.

\begin{proof}
Fix some $\smry = (b_0, \ldots, b_k,\blanks)\in \summaries$ throughout the proof.

Suppose a vertex~$v \notin \smrMore{\smry}$ has an $\smry$-witness~$\sigma$. 
We show that $v$ is in $\smrExact{\smry}$.

Observe that $v\not \in \smrMore{\smry}$ implies $\summaryMax(v) \lelex \smry$. Furthermore, since $\sigma$ is enforcing, every $(\sigma,v)$-play containing no bad move of Player~$1$ has value at least $b_0$. Hence, the maximum value Player~$0$ enforces from $v$ is at least $b_0$. Therefore, we obtain
	\begin{equation}\label{eqn:lemChar}
	(b_0,\blank,\blank,\blank,\blank)\leq_{lex} \summaryMax(v) \leq_{lex} \smry.
	\end{equation}
If $ k = 0 $, i.e., $\smry = (b_0, \blanks)$ then we are done. 

So, suppose $k > 0$. 
For every play prefix $\p'$ ending in $v'\in \smrMore{\smry}$, let $\sigma_{\p'}$ be an adaptive strategy such that $\stratsummary{\sigma_{\p'}}{\p'} = \summaryMax(\p') = \summaryMax(v')$ (the second equality follows from Lemma~\ref{lem:strongly ad if and only if e(p) is summaryMax(v)}). 
By Remark~\ref{remark:stronglyadaptivestrat}.\ref{remark:stronglyadaptivestrat:optimumrealized}, such a strategy always exists.

We combine these $\sigma_{\p'}$ into a strategy $\sigma_v$ as follows:
for any play prefix~$\p$, $\sigma_v(\p) = \sigma(\p)$ if $\p$ does not contain any vertex in $\smrMore{\smry}$. 
Otherwise, $\sigma_v(\p) = \sigma_{\p'}(\p)$, where $\p'$ is the minimal prefix of $\p$ containing a vertex of $\smrMore{\smry}$, i.e., no strict prefix of $\p'$ contains a vertex in $\smrMore{\smry}$. 
Note that the sets $\smrExact{\smry}$ and $\smrMore{\smry}$ are disjoint, so this is well-defined.
We call $\sigma_v$ the continuation of $\sigma$ with $(\sigma_{\p'})_{\p'}$.

Note that $\sigma_v$ does not make a bad move (of Player~$0$) in any $(\sigma_v, v)$-play, as $\sigma$ enforces~$b_0$ (the largest value that can be enforced from $v$ due to $v \notin \smrEMore{\smry}$) and every bad move of Player~$1$ leading to $\sigma_v$ simulating an adaptive strategy, which does not make any bad move either.
Hence, $\sigma_v$ is also adaptive by Lemma~\ref{lem:follow adaptive is adaptive}. 

We claim that $\stratsummary{\sigma_v}{v} \gelex \smry$. This equality implies $\summaryMax(v) \gelex \stratsummary{\sigma_v}{v} \gelex \smry$.
Then, we have $\summaryMax(v) = \smry$, as we have already argued $\summaryMax(v) \lelex \smry$.
	
So, let us prove $\stratsummary{\sigma_v}{v} \gelex \smry$.
To this end, we show that $\playsummary{\sigma_v}{v}{\rho} \gelex \smry$ for every $(\sigma_v,v)$-uncovered play~$\rho$.
Fix such a play. 
	
As $k > 0$ and due to $\sigma$ being enabling, there is a $(\sigma_v,v)$-play~$\rho_B$ in which Player~$1$ makes at least one bad move. 
Due to Remark~\ref{remark:leftshiftSummary}, we can pick $\rho_B$ such that $\playsummary{\sigma_v}{v}{\rho_B} = \smry$. 
Hence, if $\rho$ does not contain any bad moves by Player~$1$ (which implies $\playsummary{\sigma_v}{v}{\rho} = (b_0, \blanks)$) then $\rho$ is covered by $\rho_B$.
This contradicts our choice of $\rho$.
Thus, we can assume that $\rho$ contains at least one bad move of Player~$1$.
Now, by definition of $\sigma_v$, the play~$\rho$ is consistent with $\sigma$ up to the first bad move of Player~$1$.
	
As $\sigma$ is enforcing, any play~$\rho' \in \bmfree{\sigma}{v}$ is winning w.r.t.\ the parity condition of $\game^{b_0}$.
Hence, the value of $\rho'$ is at least $b_0$.
Thus, $\rho'$ never visits the vertex set~$\enfExact{b_0'}$ for some truth value $b_0'< b_0$.  
Note that by Equation~(\ref{eqn:lemChar}), $v\in \enfExact{b_0}$ as the maximal value Player~$0$ can enforce is $b_0$. Hence, $\rho$ starts in $\enfExact{b_0}$ and it leaves $\enfExact{b_0}$ only when Player~$1$ makes a bad move, which leads to $\enfEMore{b_0}$.
More precisely, the first bad move of Player~$1$ in $\rho$ is a move from some vertex~$v_p$ in $\pre(\smrExact{\smry^\star})$ for some $\smry^\star \in \summaries$ such that the first entry of $\smry^\star$ is strictly greater than $b_0$.
	
As $\sigma$ is evading,  there are two cases: either $\smry^\star \gelex \leftshift{\smry}$ or $\smry^\star$ is a strict prefix of $\leftshift{\smry}$. 
In the second case, by Remark~\ref{remark:leftshiftSummary}, $\rho$ is covered by the play~$\rho_B$, which again contradicts our choice of $\rho$.
In the first case, after the first bad move, the play reaches a vertex~$v^\star$ in $\smrExact{\smry^\star}$.
Let $\p^\star$ be the prefix of $\rho$ ending in this vertex $v^\star$. Thus, $\rho$ is a $(\sigma_{\p^\star},\p^\star)$-play by construction.
There are two subcases:  
either $\rho$ is an $(\sigma_{\p^\star},\p^\star)$-uncovered play and \[\playsummary{\sigma_{\p^\star}}{\p^\star}{\rho}\gelex \stratsummary{\sigma_{\p^\star}}{\p^\star} = \summaryMax(v^\star)=\smry^\star \gelex \leftshift{\smry}\] or
$\rho$ is a $(\sigma_{\p^\star},\p^\star)$-covered play and $\playsummary{\sigma_{\p^\star}}{\p^\star}{\rho}$ is a strict prefix of $\playsummary{\sigma_{\p^\star}}{\p^\star}{\rho'}$ for some $(\sigma_{\p^\star},\p^\star)$-play~$\rho'$.
	
In the second subcase, due to Remark~\ref{remark:leftshiftSummary}, $\rho$ is also a $(\sigma_v,v)$-covered play such that $\playsummary{\sigma_v}{v}{\rho}$ is a strict prefix of $\playsummary{\sigma_v}{v}{\rho'}$, which again contradicts our choice of $\rho$.
In the first subcase, we obtain $\playsummary{\sigma_v}{v}{\rho} \ge \smry$ by Remark~\ref{remark:leftshiftSummary}, which completes the first direction of the proof.
	
For the other direction, suppose a vertex~$v$ belongs to $\smrExact{\smry}$. 
By definition, $\smrExact{\smry} \cap \smrMore{\smry} = \emptyset$, which implies $v \notin \smrMore{\smry}$.
By Remark~\ref{remark:stronglyadaptivestrat}.\ref{remark:stronglyadaptivestrat:optimumrealized}, there exists a strategy~$\sigma$ such that $\stratsummary{\sigma}{v} = \smry$.

It remains to be shown that there is an $\smry$-witness from $v$.
We actually prove a more general result, which will be useful later on: If for some play prefix~$\p$, there exists a strategy~$\sigma$ such that $\stratsummary{\sigma}{\p} = \smry$, then $\sigma$ is an $\smry$-witness from $\p$, i.e., we show the result for arbitrary play prefixes~$\p$.

So, assume we have a strategy~$\sigma$ such that $\stratsummary{\sigma}{\p}  = \smry$.
Hence, there is an $(\sigma, \p)$-uncovered play~$\rho_m$ with $\playsummary{\sigma}{\p}{\rho_m} = \smry$ and $\playsummary{\sigma}{\p}{\rho} \gelex \smry$ for every $(\sigma, \p)$-uncovered play~$\rho$.
We show that the set~$\bmfree{\sigma}{\p}$ of $(\sigma,\p)$-plays without bad moves of Player~$1$ after $\p$ satisfies the three properties of an $\smry$-witness.

The Enforcing property is satisfied by the fact that $\sigma$ enforces the truth value~$b_0$ from $\p$ (as it enforces the first entry of $\smry$ due to $\stratsummary{\sigma}{\p}= \smry$), which implies that the parity condition of $\game^{b_0}$ is satisfied.

Now, assume $k>0$. Then, $\rho_m$ must contain a bad move by Player~$1$. 
Note that the prefix~$\p'$ of $\rho_m$ ending at the position before the first bad move ends in $\pre\big( \smrExact{\leftshift{\smry}}\big)$.
So, there is also a play in $\bmfree{\sigma}{\p}$ that visits this set, but does not contain any bad move by Player~$1$, i.e., an extension of $\p'$ where Player~$0$ uses $\sigma$ and Player~$1$ uses an adaptive strategy for her (which does not make any bad moves). 
Thus, $\sigma$ satisfies the Enabling property.
	
To conclude, assume towards a contradiction, that there is a play in $\bmfree{\sigma}{\p}$ that visits $\pre\big(\smrExact{\smrydash}\big)$ for some summary $\smrydash = (b_0',\ldots, b_{k'}',\blanks) \in \evade{\smry}$, i.e., $\sigma$ does not satisfy the Evading property.
Then, there is a play prefix~$\p'$ extending $\p$ and consistent with $\sigma$ that ends in a vertex~$v'$ such that $\summaryMax(\p') = \summaryMax(v') = \smrydash$ (recall Lemma~\ref{lem:strongly ad if and only if e(p) is summaryMax(v)}).
	
Then, $\stratsummary{\sigma}{\p'} \lelex \smrydash$. 
Hence, by Remark~\ref{remark:existenceofplaysummary}, there exists a $(\sigma, \p')$-uncovered play~$\rho'$ with $\playsummary{\sigma}{\p'}{\rho'} \lelex \smrydash$.
Note that $\rho'$ is also a $(\sigma, \p)$-play, as $\p'$ is a $(\sigma,\p)$-play prefix by construction.
Due to Remark~\ref{remark:leftshiftSummary}, $\playsummary{\sigma}{\p}{\rho'} \lelex (b_0, b_0',\ldots,  b_{k'}',\blanks ) \llex \smry$. 
This is a contradiction to $\playsummary{\sigma}{\p}{\rho} \gelex \smry$ for every $(\sigma,\p)$-uncovered play~$\rho$.
\end{proof}

\section{Proof of Lemma~\ref{lemma:algoforsummary}}
The following remark shows how the maximal summary of a vertex is related to its successor. We use this remark in the next proofs.
\begin{remark}\label{remark:successor}
Let $v \in \smrExact{\smry}$ for $\smry \llex (1111,\blanks)$.
If $v \in V_0'$ then
\begin{itemize}
	\item $v$ has no successor in $\smrMore{\smrydash}$.
	\item $v$ has at least one successor in $\smrExact{\smry}$,
\end{itemize}
If $v \in V_1'$ then:
\begin{itemize}
	\item If $v$ has a successor in $\smrExact{\smrydash}$ for some $\smrydash\llex \smry$ then $\smrydash$ is a strict prefix of~$\smry$.
	\item If $v$ has a successor in $\smrExact{\smrydash}$ for some $\smrydash\glex \smry$ then neither of $\smry$ and $\leftshift{\smry}$ is a strict prefix of~$\smrydash$.
	\item If $v$ does not have a successor in $\smrExact{\smry}$, then it has at least one successor in $\smrExact{\leftshift{\smry}}$.
\end{itemize}
\end{remark}

Now, we give the proof for Lemma~\ref{lemma:algoforsummary}. Recall that we need to prove the our algorithm correctly computes the sets~$\smrExact{\smry}$.

\begin{proof}
As argued earlier, the claim is trivially true for the largest summary, as we have $\smrExact{(1111,\blanks)} = \enfExact{1111}$.

Assuming we already have computed $\smrExact{\smrydash}$ for every $\smrydash \glex \smry$, suppose $U$ is the vertex set computed by the algorithm in that situation. 
We claim $U = \smrExact{\smry}$.

First, we show $U\subseteq \smrExact{\smry}$ by showing that each vertex~$v\in U$ satisfies the characterization given in Lemma~\ref{lem:characterization}. Suppose $v$ is a vertex in $U$. Excluding $\smrMore{\smry}$ in Step~\ref{item:algo1:3} ensures that $v\not \in \smrMore{\smry}$, the first part of the characterization.

If $k = 0$, then since $v\in \enfMore{b_0}$, any adaptive strategy $\sigma$ for Player~$0$ enforces $b_0$ from $v$. Hence, $\sigma$ is enforcing. 
Thus, $v\in \smrExact{\smry}$, as the other two properties of an $\smry$-witness are trivially satisfied in this case.

If $k > 0$, then by Step~\ref{item:algo1:4}, there is a path $v_0 v_1\cdots v_k$ from $v=v_0$ to some $v_k \in \pre\big(\smrExact{\leftshift{\smry}}\big)$ in the arena restricted to $\win{{\smry}}$. 
Let $\sigma({\smry})$ be the winning strategy for Player~$0$ computed in Step~\ref{item:algo1:4} and let $\sigma_0$ be an adaptive strategy for Player~$0$ in the game $\game'$. For $v'\in \smrMore{\smry}$, let $\sigma_{v'}$ be the adaptive strategy that maximizes the summary of $v'$ over all adaptive strategies.  Now consider the strategy $\sigma$ such that
\begin{equation*}
\sigma(\p) =
\begin{cases}
	v_{i+1} &\text{if } \p = v_0v_1\cdots v_i \text{ ending in a Player~$0$ vertex},\\
	\sigma({\smry})(\p) & \text{$\p$ is a play prefix in }\win{{\smry}} \text{ but not a prefix of $v_0 v_1\cdots v_k$},\\
	\sigma_{v'}(v'\p'') &\text{if } \p = \p'v'\p'' \text{ for some play prefix~$\p'$ in }\win{{\smry}},\\
	&\qquad\quad\text{$v'\in \smrMore{\smry}$, and play prefix~$\p''$},\\
	\sigma_0(\p) & \text{otherwise.}
\end{cases}
\end{equation*}
Note that $\sigma$ never makes a bad move and eventually follows an adaptive strategy by construction. Hence, it is also adaptive by Lemma~\ref{lem:follow adaptive is adaptive}.
Let $\rho$ be a  $(\sigma, \p)$-play from a play prefix~$\p$ in $\win{\smry}$. If $\rho$ stays in $\win{{\smry}}\subseteq \enfExact{b_0}$, then it eventually follows $\sigma({\smry})$ and has value at least $b_0$. If not, it follows some strategy $\sigma_{v'}$ for some $v'\in \smrMore{\smry}$ which has value at least $b_0$ by construction. Hence, $\sigma$ satisfies the Enforcing property. 

Moreover, the exists a $(\sigma,v)$-play~$v_0 v_1\cdots v_k \cdots$ (formed by Player~$0$ using $\sigma$ after $v_0 \cdots v_k$ and Player~$1$ using an adaptive strategy after the prefix) showing that $\sigma$ satisfies the Enabling property.

 Furthermore, by removing $\bigcup_{\smrydash\in \evade{\smry}} \reach{\smrExact{\smrydash}}$ in Step~\ref{item:algo1:3}, it is also ensured that $\sigma$ satisfies the Evading property. Hence, $\sigma$ is an $\smry$-witness from $v$, which implies $v\in \smrExact{\smry}$.

For the other direction, we show that $\smrExact{\smry} \subseteq U$ by showing that if a vertex~$v$ satisfies the characterization given in Lemma~\ref{lem:characterization}, then $v\in U$. 

Note that since $U\subseteq \smrExact{\smry}$, the subgraph $\arena_{\smry}$ does not have terminal vertices by Remark~\ref{remark:successor}.
Also, we have $v\not \in \smrMore{\smry}$ by the first item of the characterization. Furthermore, there is an $\smry$-witness~$\sigma$ from $v$.

Hence, by the Enforcing property, $\sigma$ enforces $b_0$ from $v$. Hence, $v\in \enfMore{b_0}$. If $k =0$, then $U = \enfMore{b_0}\setminus \smrMore{\smry}$. Thus, $v\in U$ as required. 

Now suppose $k > 0$. 
If $v\in \reach{\smrExact{\smrydash}}$ for some $\smrydash\in \evade{\smry}$, then Player~$1$ has a strategy $\tau$ that forces the token to reach $\smrExact{\smrydash}$ from $v$ while only visits vertices in $\enfExact{b_0}$. 
Hence, there is a $(\sigma,v)$-play $\rho$ that visits $\pre(\smrExact{\smrydash})$ with $\smrydash\in \evade{\smry}$. 
Hence, there also exists a $(\sigma,v)$-play containing no bad move of Player~$1$ that visiting $\pre(\smrExact{\smrydash})$.
This contradicts $\sigma$ being evading. Hence, $v\not \in \bigcup_{\smrydash\in \evade{\smry}} \reach{\smrExact{\smrydash}}$. Therefore, we obtain $v\in \smrExact{\smry}$. 

Now, as $\sigma$ is enabling, there exists a $(\sigma,v)$-play $\rho$ containing no bad move of Player~$1$ that visits $\pre(\smrExact{\leftshift{\smry}})$.
As $\sigma$ is evading, $\rho$ does not visit $\reach{\smrExact{\smrydash}}$ for any $\smrydash\in \evade{\smry}$. 
 Furthermore, by $\sigma$ being enforcing, since $\rho$ does not contain any bad move of Player~$1$, it stays in the vertex set $\win{{\smry}}$. Hence, $\pre(\smrExact{\leftshift{\smry}})$ is reachable from $v$ in the graph restricted to $\win{{\smry}}$. Therefore, $v\in U$.
\end{proof}

\section{Proof of Lemma~\ref{lem:characterizationStronglyad}}

Recall that we need to prove that in the game $\game'$, a strategy~$\sigma$ is strongly adaptive if and only if it is a $\summaryMax(\p)$-witness from every play prefix $\p$.

\begin{proof}
First, consider a strategy~$\sigma$ that is a $\summaryMax(\p)$-witness from every play prefix $\p$.
Note that $\sigma$ is adaptive by the Enforcing property.
Now we show by induction over summaries~$\smry \in \summaries$, from largest to smallest, that $\stratsummary{\sigma}{\p} \gelex \smry$ for every play prefix~$\p$ ending in $\smrExact{\smry}$, which implies that $\sigma$ is strongly adaptive.

So, for the induction start, we have to consider~$\smry = (1111,\blanks)$, the maximal summary.
So, let $\rho$ be a $(\sigma, \p)$-play such that $\p$ ends in $\smrExact{\smry}$.
Note that $\rho$ cannot contain a bad move after the prefix~$\p$, as $1111$ is the maximal truth value, i.e., $\rho \in \bmfree{\sigma}{\p}$.
Hence, $\rho$ satisfies the parity condition of $\game^{1111}$ due to the Enforcing property, i.e., $\rho$ has value~$1111$.
Thus, we have $\stratsummary{\sigma}{\p} = \smry$ as required.

For the induction step, we consider some~$\smry \llex (1111, \blanks)$.
The induction hypothesis yields that $\stratsummary{\sigma}{\p} \gelex \smry^\star$ for every play prefix~$\p$ ending in $\smrExact{\smry^\star}$ for some $\smry^\star \glex \smry$.
Hence, we have $\stratsummary{\sigma}{\p'} = \summaryMax(\p') = \summaryMax(v')$ (the last equality follows from Lemma~\ref{lem:strongly ad if and only if e(p) is summaryMax(v)}) for every play prefix $\p'$, where $v'\in \smrMore{\smry}$ is the last vertex of $\p'$.

Now, let $\sigma_{\p'} = \sigma$ for every play prefix~$\p'$ ending in $\smrEMore{\smry}$, and let $\sigma'$ be the continuation of $\sigma$ with $(\sigma_{\p'})_{\p'}$. 
Then, we have $\summaryMax(\sigma',\p) \gelex \smry$ using the same reasoning as in the proof of Lemma~\ref{lem:characterization} (note that $\sigma$ satisfies exactly the properties required for that reasoning).  
The desired result follows by noticing that $\sigma'$ is equal to $\sigma$.

For the other direction, let $\sigma$ be a strongly adaptive strategy.
We need to show that $\sigma$ is a $\summaryMax(\p)$ witness from every play prefix~$\p$.

Due to $\sigma$ being strongly adaptive, we have $\stratsummary{\sigma}{\p} = \summaryMax(\p)$ for every play prefix~$\p$.
Hence, by the argument presented in the second direction of the proof of Lemma~\ref{lem:characterization}, we conclude that $\sigma$ is indeed a $\summaryMax(\p)$-witness from every $\p$.
\end{proof}

\section{Proof Lemma~\ref{lemma:gracioustostronglyad}}
We use the following remarks in the proof.

\begin{remark}\label{remark:qnewWinning}
Any play in $\game_{\smry}$ containing $\qnew$ is both $\strong_{\smry}$ and $\weak_{\smry}$-winning.
\end{remark}

\begin{remark}
Let $\smry\in \summaries$ and $v$ be a terminal vertex in the game $\game'$ restricted $\smrExact{\smry}$. Then, $v$ satisfies the following:
\begin{itemize}
\item $v\in V_1'$.
\item $v$ has no successor in $\smrExact{\smry}$.
\item $v$ has a successor in $\smrExact{\leftshift{\smry}}$.
\end{itemize}
\end{remark}

\begin{remark}\label{remark:stratsuccessor}
Suppose $\sigma$ is a strategy of Player~$0$ in $\game'$ that never makes a move from a play prefix ending in $\smrExact{\smry}$ to a vertex in $\smrLess{\smry}$ for some $\smry\in \summaries$. 
Let $\p$ be a play prefix in $\game'$  ending in $\smrExact{\smry}$ and $\rho$ be a $(\sigma,\p)$-play in $\game'$.
If $\rho$ visits a vertex $v'\in \smrExact{\smrydash}$ after~$\p$ for some $\smrydash\in \summaries$, then one the following holds:
		\begin{itemize}
		\item $\smry = \smrydash$.
		\item $\smrydash\glex\smry$ such that $\smry$ is not a strict prefix of $\smrydash$.
		\item $\smrydash$ is a strict prefix of $\smry$.
		\end{itemize}
Moreover, since there are only finitely many summaries and $\rho$ can not return to the same set $\smrExact{\smrydash}$ after leaving the set, it holds that $\rho$ has a suffix staying in $\smrExact{\smrydash}$ forever for some $\smrydash\in \summaries$ satisfying one of the above.
\end{remark}

Let us now prove the lemma.
Recall that we need to show that there exists a strongly adaptive strategy in $\game'$ if and only if there exists a uniformly gracious strategy in every obliging game $\game_{\smry}$.
Moreover, given a uniformly gracious strategy with finite memory in each obliging game $\game_{\smry}$, one can effectively combine these into a strongly adaptive strategy with finite memory in $\game'$.

\begin{proof}
First, assume there exists a strongly adaptive strategy $\sigma$ in the game~$\game'$. Let $\smry = (b_0,\ldots,b_k,\blanks)\in \summaries$. We show that $\sigma$ is a uniformly gracious strategy in the obliging game~$\game_{\smry}$.

Let $\rho$ be a $(\sigma,\p)$-play in $\game_{\smry}$ for some play prefix $\p$ ending in $\smrExact{\smry}$. 
We show that $\rho$ satisfies the strong winning condition and that there is a $(\sigma,\p)$-play~$\rho'$ that satisfies the weak condition.
This implies that $\sigma$ is uniformly gracious.

If $\rho$ contains~$\qnew$, then it is both $\strong_{\smry}$ and $\weak_{\smry}$-winning by Remark~\ref{remark:qnewWinning}, i.e., we can use $\rho' = \rho$ to finish the argument.

So, now suppose $\rho$ does not contain the vertex $\qnew$. Then, $\rho$ is also a $(\sigma,\p)$-play in the game $\game'$.  Since $\sigma$ is enforcing, $\rho$ satisfies the parity condition of $\game^{b_0}$, which implies it also $\strong_{\smry}$-winning.
If $k = 0$, then the weak condition is satisfied by every play, i.e., we  can again use $\rho' = \rho $ to finish the argument.

Otherwise, i.e., if $k > 0$, by the Enabling property, there exists a $(\sigma,\p)$-play~$\rho'$ in $\game'$ that visits $\pre(\leftshift{\smry})$. Let $\p'$ be the minimal prefix of $\rho$ containing a vertex in  $\pre(\leftshift{\smry})$ after $\p$. We claim that all vertices between $\p$ to $\p'$ are in $\smrExact{\smry}$. This implies that $\p'$ is also a $(\sigma,\p)$-play prefix in $\game_{\smry}$. 
Then, by iterating this argument ad infinitum, we obtain a $(\sigma,\p)$-play in $\game_{\smry}$ that visits $\pre(\leftshift{\smry})$ infinitely often. This play satisfies the weak condition.

Now, we only need to prove that all vertices between $\p$ to $\p'$ are in $\smrExact{\smry}$. Let $v$ be a vertex in $\p'$ after $\p$. First, note that a strongly adaptive strategy does not make a move from a play prefix $\p_1$ ending in $\smrExact{\smry_1}$ to a vertex $v_1$ in $\smrLess{\smry_1}$. 
If it would then every $(\sigma,\p_1)$-uncovered play is also a $(\sigma,\p_1 v_1)$-uncovered play and vice versa.
This implies that $\stratsummary{\sigma}{\p_1}$ and $\stratsummary{\sigma}{\p_1 v_1}$ are equal, as they are obtained by minimizing over the same set of plays.
This contradicts the assumption that $v_1\in \smrLess{\smry}$. Hence, by Remark~\ref{remark:stratsuccessor}, $v\in \smrExact{\smrydash}$ for some $\smrydash\in \summaries$ such that $\smrydash\gelex \smry$ or $\smrydash$ is a strict prefix of $\smry$. 

Due to Remark~\ref{remark:existenceofplaysummary}, $\p'$ can be extended to a $(\sigma,\p)$-uncovered play $\rho^\star$ satisfying $\playsummary{\sigma}{\p}{\rho^\star} = \smry$.  Since Player~$1$ makes $k > 0$ bad moves after $\p'$ in $\rho^\star$ (as $\p'$ is the minimal extension of $\p$ visiting a vertex where Player~$1$ can make a bad move), we also have $\playsummary{\sigma}{\p_v}{\rho^\star} = \smry$ for the prefix $\p_v$ of $\p'$ ending in $v$. 
Recall that either $\smrydash\gelex \smry$ or $\smrydash$ is a strict prefix of $\smry$. We show that both $\smrydash\glex \smry$ and $\smrydash$ being a strict prefix of $\smry$ lead to a contradiction, leaving us only with the conclusion  $\smrydash = \smry$ as required.

First, assume we have $\smrydash\glex \smry$. 
Since $\stratsummary{\sigma}{\p_v} = \smrydash$, every $(\sigma, \p_v)$-uncovered play~$\rho''$ satisfies $\playsummary{\sigma}{\p_v}{\rho''} \gelex \smrydash$. But the $(\sigma, \p_v)$-play $\rho^\star$ has summary~$\smry \llex \smrydash$.
We show that $\rho^\star$ is $(\sigma, \p_v)$-uncovered, which yields the desired contradiction.
If $\rho^\star$ is $(\sigma,\p_v)$-covered by another $(\sigma,\p_v)$-play $\rho''$, then $\rho^\star$ is also $(\sigma,\p)$-covered by $\rho''$, which contradicts the assumption that $\rho^\star$ is $(\sigma, \p)$-uncovered.

Finally, assume $\smrydash$ is a strict prefix of $\smry$. By definition, $\stratsummary{\sigma}{\p_v} = \smrydash$ implies that there is some $(\sigma, \p_v)$-uncovered play~$\rho''$ with $\playsummary{\sigma}{\p_v}{\rho''} =  \smrydash$. However, $\rho^\star$ is also a $(\sigma, \p_v)$-play and it covers $\rho''$, as $\smrydash$, the summary of $\rho''$, is a strict prefix of $\smry$, the  summary of $\rho^\star$. 

For the other direction, assume there is a uniformly gracious strategy~$\sigma_{\smry}$ for every game $\game_{\smry}$. 
Let $\sigma$ be the strategy obtained by combining all strategies~$\sigma_{\smry}$ as follows: for any play prefix~$\p$ ending in $\smrExact{\smry}$, we have $\sigma(\p) = \sigma_{\smry}(\p')$, where $\p'$ is the longest suffix of $\p$ which is a $\sigma_{\smry}$-play in $\game_{\smry}$. 
It remains to show that, for every play prefix~$\p$ ending in $\smrExact{\smry}$ for some $\smry = (b_0,\ldots,b_k,\blanks) \in \summaries$, $\sigma$ is an $\smry$-witness from $\p$. This implies that $\sigma$ is strongly adaptive by Lemma~\ref{lem:characterizationStronglyad}.

First, we show that $\sigma$ satisfies the Enabling property. 
If $k = 0$ then the property is satisfied trivially. 
If not, then let $\p'$ be the longest suffix of $\p$ which is a $\sigma_{\smry}$-play in $\game_{\smry}$. 
Then, every $(\sigma,\p)$-play that stays in $\smrExact{\smry}$ follows $\sigma_{\smry}$, i.e., it has a suffix that is a $(\sigma_{\smry},\p')$-play. 
Since, $\sigma_{\smry}$ is uniformly gracious, there exists a  $(\sigma_{\smry},\p')$-play $\rho^\star$ in $\game_{\smry}$ that is $\weak_{\smry}$-winning, i.e., $\rho^\star$ visits $\pre(\leftshift{\smry})\cup \{\qnew\}$ infinitely often. 
Note that by Remark~\ref{remark:qnewWinning}, every predecessor of $\qnew$ also belongs to $\pre(\leftshift{\smry})$. 
Hence, $\rho^\star$ visits $\pre(\leftshift{\smry})$ at least once. 
Let $\p'\p^\star$ be a prefix of $\rho^\star$ ending in $\pre(\leftshift{\smry})$. 
Then, there exists a $(\sigma,\p\p^\star)$-play which is also a $(\sigma,\p)$-play satisfying the Enabling property, i.e., one in which Player~$1$ does not make a bad move. 

Now, given a play $\rho\in \bmfree{\sigma}{\p}$ containing no bad move of Player~$1$ after $\p$, we show that $\rho$ satisfies the Enforcing and the Evading property.

Note that by construction, $\sigma$ never makes a move from a play prefix ending in $\smrExact{\smry}$ to a vertex in $\smrLess{\smry}$. Hence, by Remark~\ref{remark:stratsuccessor}, $\rho$ has a suffix that stays in $\smrExact{\smrydash}$ forever for some $\smrydash = (b_0',\ldots,b_{k'}',\blanks)\in \summaries$ such that $\smrydash\gelex\smry$ or $\smrydash$ is a strict prefix of $\smry$. In any case, $b_0'\geq b_0$. Furthermore, $\rho$ has a suffix that is a $\sigma_{\smrydash}$-play which is $\strong_{\smrydash}$-winning. Hence, it has value at least $b_0'$. Therefore, $\rho$ satisfies the Enforcing property.

Now, suppose $\rho$ does not satisfy the Evading property and visits some vertex $v\in \pre(\smrExact{\smry^\star})$ for some $\smry^\star=(b_0^\star,\ldots,b_{k^\star}^\star,\blanks)\in \evade{\smry}$.
Suppose $v\in \smrExact{\smrydash}$ for $\smrydash = (b_0',\ldots,b_{k'}',\blanks)\in \summaries$.
We claim that $\smry^\star \in \evade{\smrydash}$. 

Step~\ref{item:algo1:3} of the algorithm to compute $\smrExact{\smry}$ and its proof of correctness imply that $\reach{V_{\smry^\star}}$ and $V_{\smrydash}$ are disjoint. 
Hence, the facts that $v$ is a vertex of Player~$1$ (since there exists a bad move from $v$ to $\smrExact{\smry^\star}$) and $v \in \pre(\smrExact{\smry^\star})$, which implies $v \in \reach{\smrExact{\smry^\star}}$, yield the desired contradiction. 

Now, we only need to prove $\smry^\star \in \evade{\smrydash}$ to conclude the proof. First, note that since $\rho$ does not contain a bad move of Player~$1$ after $\p$, we have $b_0' = b_0$, which implies $b_0^\star>b_0'$. To prove the other two conditions, we consider two cases derived as follows: By Remark~\ref{remark:stratsuccessor}, it holds that $\smrydash\gelex\smry$ or $\smrydash$ is a strict prefix of $\smry$.

If $\smrydash\gelex\smry$, then $\smry^\star \llex \leftshift{\smry} \lelex \leftshift{\smrydash}$ (as $b_0 = b_0'$). Furthermore, if $\smry^\star$ is a strict prefix of $\leftshift{\smrydash}\gelex\leftshift{\smry}$, then either $\smry^\star \gelex \leftshift{\smry}$ or $\smry^\star$ is a strict prefix of $\leftshift{\smry}$. This contradicts the fact that $\smry^\star \llex \leftshift{\smry}$ and $\smry^\star$ not being a strict prefix of $\leftshift{\smry}$.

If $\smrydash$ is a strict prefix of $\smry$, then $\smry^\star$ is not a strict prefix of $\smrydash$ (as it is not a strict prefix of $\smry$). Furthermore, since $\smry^\star \llex \leftshift{\smry}$, either $\smry^\star \llex \leftshift{\smrydash}$ or $\leftshift{\smrydash}$ is a strict prefix of $\smry^\star$. 
The second case never holds by Remark~\ref{remark:successor}.
Therefore, the claim is proved.
\end{proof}

\end{document}